\newtheorem{lemma}{Lemma}
\newtheorem{prop}{ Proposition}
\theoremstyle{definition}
\begin{document}

\title{ Robust IRS-aided Secrecy Transmission with Location Optimization
\thanks{Jiale Bai and Hui-Ming Wang are with the School of Information and Communications Engineering, Xi'an Jiaotong University,
Xi'an 710049, China, and also with the Ministry of Education Key Laboratory for Intelligent Networks and Network Security, Xi'an Jiaotong
University, Xi'an 710049, China (e-mail: bjl19970954@stu.xjtu.edu.cn; xjbswhm@gmail.com).
}
\thanks{Peng Liu is with the Wireless Technology Lab, 2012 Labs, Huawei Technologies, Shenzhen, China (e-mail: jeremy.liupeng@huawei.com).
}
\author{Jiale Bai, Hui-Ming Wang, \emph{Senior Member, IEEE} and Peng Liu, \emph{Member, IEEE}
}
}
\maketitle

\begin{abstract}
In this paper, we propose a robust secrecy transmission scheme for intelligent reflecting surface (IRS) aided communication systems. Different from all the existing works where IRS has already been deployed at a fixed location, we take the location of IRS as a variable to maximize the secrecy rate (SR) under the outage probability constraint by jointly optimizing the location of IRS, transmit beamformer and IRS phase shifts with imperfect channel state information (CSI) of Eve, where we consider two cases: a) the location of Eve is known; b) only a suspicious area of Eve is available. We show a critical observation that CSI models are different before and after IRS deployment, thus the optimization problem could be decomposed and solved via a two-stage framework. For case a), in the first stage, universal upper bounds of outage probabilities only related to the location of IRS are derived which can be optimized via successive convex approximation (SCA) method. In the second stage, we develop an alternative optimization (AO) algorithm to optimize beamformer and phase shifts iteratively. For case b), we propose a Max-Min SR scheme based on two-stage framework, where the location of IRS is optimized based on the worst location of Eve. Simulation results indicate the importance of the location of IRS optimization.
\end{abstract}	

\begin{IEEEkeywords}
Physical layer security, intelligent reflecting surface,  secrecy rate, location optimization.
\end{IEEEkeywords}

\section{Introduction}

Intelligent reflecting surface (IRS) has drawn wide attention as an appealing candidate for future wireless communications due to its low hardware cost and energy
consumption \cite{Wu-19}. Specifically, IRS consists of a large number of low-cost passive elements which can induce certain phase shift by a software
controller to incident electromagnetic signal waves, thus making the wireless environment programmable and controllable. By intelligently adapting the phase shifts
of IRS, the reflected signals can be combined constructively at desired receivers to improve the capacity,
spectral and energy efficiency of wireless networks \cite{Wu-19b}\cite{Huang-19}. Furthermore, due to the compact size, IRS can be deployed easily
on buildings or ceilings and integrated into the traditional communication systems with minor modifications. These significant advantages make IRS an
energy-efficient technique that is applied into various communication scenarios such as multi-cell \cite{Xie-21}, wireless information and power transfer design
\cite{Pan-20} and cognitive radio design \cite{Xu-20}. 

Due to the above capability, it is a natural idea to apply IRS to enhance  physical layer security (PLS) of wireless transmissions. By properly adjusting the phase shift, the reflected signal by IRS not only can
be superimposed constructively at Bob, but also superimposed destructively at Eve so that
the propagation channel
link between the transmitter (Alice) and the legitimate user (Bob)/eavesdropper (Eve) can be enhanced/deteriorated,   thus significantly boosting
the secrecy rate (SR). Combining with various schemes, such as multiple-antenna techniques, artificial noise (AN) and cooperative jamming, the IRS can further
enhance secrecy performance.

\subsection{Related works and Motivation}
Motivated by these advantages, IRS-aided secrecy communication has been widely studied in the
literature \cite{Yu-19}-\cite{Wang-19}. Several contributions have been established for improving secrecy performance based on the ideal assumption that perfect channel state information (CSI) of both
Bob and Eve are known at Alice \cite{Yu-19}-\cite{Chu-21}. To maximize the SR, the single-user MISO model was investigated in \cite{Yu-19}-\cite{Zheng-20} by jointly
optimizing the transmit beamformer and phase shifts. Later, the model was extended to a MISO multi-user scenario \cite{Chen-19}\cite{Xu-19}. Furthermore, authors
in \cite{Dong-20b}-\cite{Chu-21} consider an MIMO single-user secrecy transmission problem. Due to the introducing of multiple antennas at both transmitter and receiver, the joint optimization problem becomes more complicated and sophisticated algorithms have been proposed to solve it.

In practice, Alice is impossible to acquire the perfect CSI of Eves so that several contributions investigated the secrecy transmission based on the assumption
with imperfect CSI of Eves \cite{Hong-21}-\cite{Yu-20}, even no CSI of Eves \cite{Dong-20b}\cite{Wang-19}. The authors in \cite{Hong-21}\cite{Xu-21}
considered a statistical CSI error of cascaded channels of Alice-IRS-Eve. In \cite{Hong-21}, the beamformer, AN covariance matrix, and IRS phase shifts are jointly optimized to minimize the transmit power subject to outage probability constraint. In \cite{Xu-21}, IRS was deployed to modulate the received signals into jamming signals so as to meet the quality-of-service (QoS) requirement of Bob and deteriorate the reception at Eves. In contrast to a statistical CSI error model \cite{Hong-21}\cite{Xu-21}, a bounded CSI error is considered in \cite{Lu-20}\cite{Yu-20}. The multiple-users MISO \cite{Lu-20} and multiple-users MISOME models \cite{Yu-20} were investigated respectively. What's more, authors in \cite{Fang-21} considered statistical CSI of Eve to maximize the ergodic secrecy rate by jointly designing the trajectory, power control of UAV and phase shifts of IRS in IRS-aided UAV SISO system. Furthermore, in more practical applications without CSI of Eve completely, \cite{Dong-20b} and \cite{Wang-19} developed an AN joint transmission scheme in MIMO and MISO systems, respectively. The results in all aforementioned works \cite{Yu-19}-\cite{Wang-19} unveiled that IRSs can significantly improve secrecy performance compared with conventional architectures without IRS.

However, all the aforementioned works \cite{Yu-19}-\cite{Wang-19} are restricted to the case that the \emph{ location of IRS is fixed}. In fact, where the IRSs are deployed is also very important to the secrecy performance. The works in \cite{Hu-20}-\cite{Mu-21} revealed the importance of IRS location optimization for improving communications rate. But there is still
no work considering IRS location optimization in secrecy transmissions. Since that the location of IRS affects both amplitudes and phases of the cascaded
channels which will impact the achievable secrecy rate, the difficulty of tackling the joint optimization problem by taking the location of IRS into consideration  increases significantly compared with the conventional case with
a fixed IRS location.
Furthermore, since eavesdropping is a passive attack so it is more practical to assume that only some statistical CSI of Eve, such as channel distribution information (CDI), is available rather than complete CSI. In the worst case, even the location of Eve is unknown at all but only some suspicious area is known where an Eve may exist. This is a typical scenario of many practical applications.

Therefore, how to provide a PLS transmission in these more practical scenarios is very important for the real applications of IRS into PLS, which motivates this work.

\subsection{ Approaches and Contributions}
In this paper, we design a robust secrecy transmission to maximize the SR under the constraint of secrecy outage probability by jointly optimizing location of IRS, transmit beamformer at Alice and phase shifts at IRS with imperfect CSI of Eve. In particular, two cases are considered: a) The location of Eve is known so the line-of-sight (LoS) component in the cascaded Alice-IRS-Eve link is available and only the CDI of the non-line-of-sight (NLoS) component is known; b) The location of Eve is unknown so neither LoS nor NLoS component is known and only a suspicious area of Eve and its CDI is available. To the best of our knowledge, this is the first work to optimize the location of IRS for robust secrecy transmission and this is also the first work to maximize the SR with CDI of Eve.
The main contributions are summarized as follows.

1) We establish the joint optimization problem when  location of Eve is known, and show a critical observation that the prior knowledge of CSI depends on the location knowledge of IRS, which has a significant impact on the underlying optimization problem. In other words, the objective function appears differently before and after IRS deployment.
This important feature invalidates conventional alternating optimization (AO) algorithm to solve the joint optimization problem.
To tackle this difficulty, a two-stage algorithm is proposed, where it optimizes the location of IRS in the first stage
and jointly optimizes beamformer and phase shifts in the second stage. This framework is the most distinctive feature compared with existing works with fixed IRS location.

2) For the first stage of optimizing the IRS location, the difficulty lies in how to handle the secrecy outage probability constraint without the beamformer at Alice and phase shifts at IRS. We tackle this problem by deriving universal upper bounds of outage probabilities for both Bob and Eve which are only related to the deployment location of IRS. In such a way, IRS location could be optimized by solving a non-convex problem via successive convex approximation (SCA) method. Then in the second stage, to handle the outage probability constraint with only CDI of Eve, we exploit Bernstein-Type Inequality (BTI) to reshape the non-convex constraints, and take an AO procedure to optimize beamformer at Alice and phase shifts at IRS iteratively.

3) In the case where the exact location of Eve is unknown but only some suspicious area is known where an Eve may exist,  we propose a Max-Min SR scheme based on the two-stage method, where in the first stage we optimize the location of IRS based on the worst location of Eve resulting in the minimized SR. The simulation results indicate the convergence and effectiveness of proposed algorithms and show that the location optimization of IRS plays a crucial role in the secrecy performance. In addition, simulation results also show that wherever Eve is located, IRS should be deployed nearby Bob to reduce large-scale path loss.

The rest of the paper is organized as follows: Section II describes the signal transmission, CSI model and problem formulate. In section III,  the algorithm is proposed to maximize SR with the location of Eve. The case without location of Eve is investigated in section IV. Simulation results have been carried out to evaluate the performance and
convergence of proposed algorithm in section V. Finally, section VI concludes the paper.

\emph{Notations}: ${\bf A}^{\rm T}$ and ${\bf A}^{\rm H}$ denote transpose and Hermitian conjugate of ${\bf A}$, respectively; $\mathbb{E}\left \{ \cdot  \right \}$
 is statistical expectation; ${\rm \rm Tr}({\bf A})$, ${\rm{Re}}\{{\bf A}\}$ and ${\rm rank}({\bf A})$ denote the trace, real elements and rank of ${\bf A}$; ${ \rm vec}({\bf A})$
 is the vector obtained by stacking all columns of matrix ${\bf A}$ on top of each other; $\|{\bf a}\|$ denotes the norm of the vector ${\bf a}$; ${\rm arg}({\bf a})$ denotes
 the phase shifts of ${\bf a}$; ${\rm diag}({\bf a})$ is to transform the vector ${\bf a}$ as a diagonal matrix with diagonal elements in ${\bf a}$; ${\rm Diag}({\bf A})$ denotes a vector whose elements are extracted from the main diagonal elements of matrix ${\bf A}$; ${\bf 1}_M$ present a column vector with $M$ ones; $\otimes $ represents the Kronecker product.

\section{System Model and Problem Formulation}

In this section, we present the signal transmission model, channel model and problem formulation  in a secure IRS-aided communication system.

\subsection{ Signal Transmission Model }\label{system}

\begin{figure}[t]
	\centerline{\includegraphics[width=3.4in]{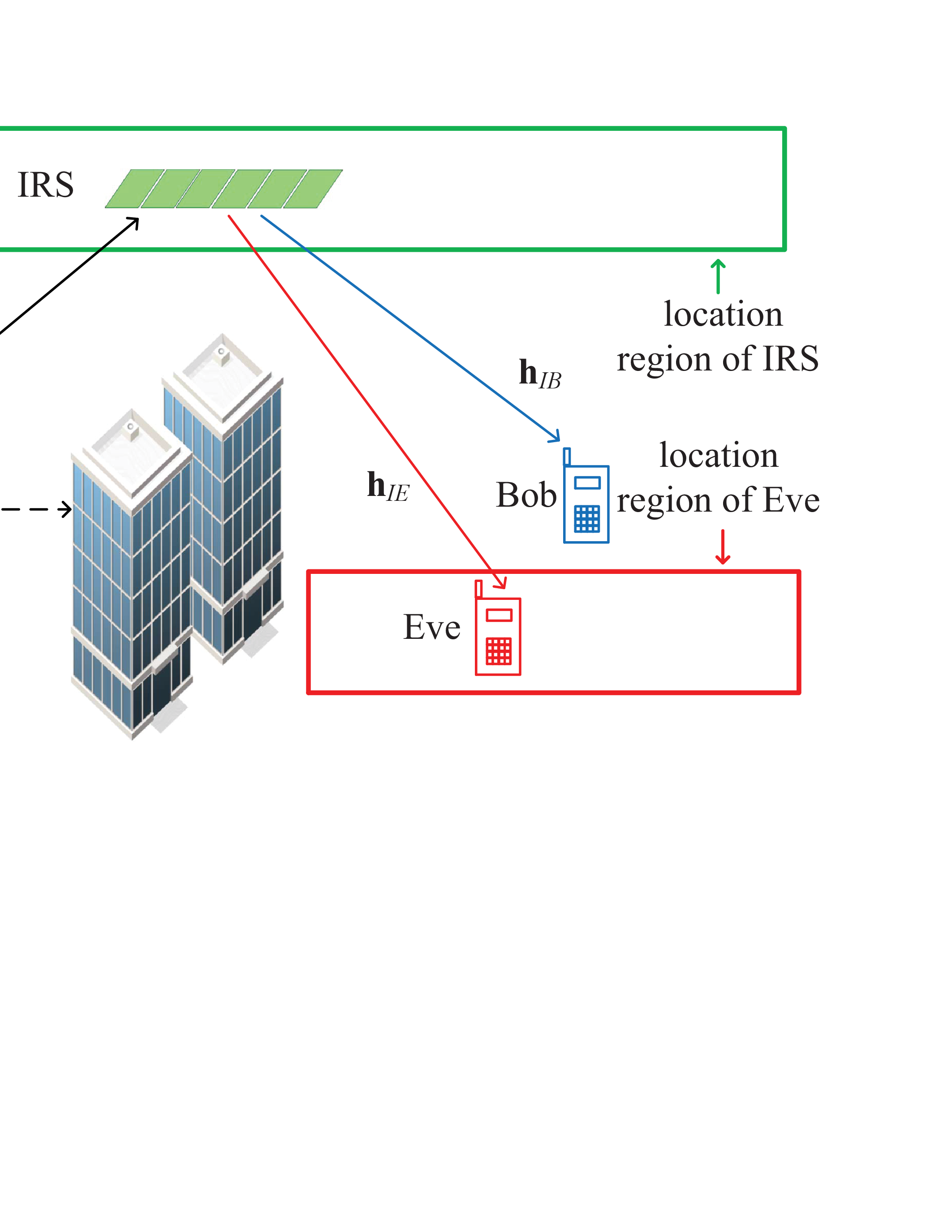}}
	\caption{ A planform of an IRS-assisted MISO secure system. The location area of Bob and Eve is denoted in the system.}
\end{figure}

Consider an IRS-assisted communication system as shown in Fig.1, which is consisted of a transmitter (Alice), a legitimate user (Bob),
an eavesdropper (Eve) and an IRS. We assume that both Alice and IRS use uniform linear arrays (ULA)
with $N_t$ antennas and $M$ reflection elements, respectively, while Bob and Eve are single antenna nodes. The direct link between Alice and Bob/Eve is
blocked due to obstacles, which is a typical scene where IRS is deployed to overcome blockage.  Without loss of generality,
we assume that Alice is located at the origin of a Cartesian coordinate system and the ULAs of Alice and IRS are along x-axis, hence the locations of Alice, IRS, Bob, and Eve are
$\boldsymbol{\omega}_{A}=\left (0,0\right)^T$, $\boldsymbol{\omega}_{I}=\left (x_I,y_I\right)^T$,  $\boldsymbol{\omega}_{B}=
\left (x_B,y_B\right)^T$ and $\boldsymbol{\omega}_{E}=\left (x_E,y_E\right)^T$, respectively. Let $\Omega_I$ specify a predefined area for deploying IRS\footnote{In general, we can find a larger rectangular area to contain IRS location area whatever geometry it is. For example, if IRS location area is circle, we search the smallest rectangle to contain the circle. After optimizing $\boldsymbol{\omega}_I$, if it is outside the circle, we search for the closest location to $\boldsymbol{\omega}_I^*$ as the optimal location of IRS. }
\begin{align}
	\notag
	\boldsymbol{\omega}_I \in \Omega_I \triangleq\left\{ \left(x_I,y_I\right)^T | x_I\in \Delta_{x_I}, y_I \in \Delta_{y_I}\right\},
\end{align}
where $\Delta_{x_I}\triangleq\left[x^{min}_I, x^{max}_I\right]$ and $\Delta_{y_I}\triangleq\left[y^{min}_I, y^{max}_I\right]$ denote the candidate deploying areas along x- and y-axes. When the location of Eve is unknown, in this paper we assume that Eve exists in a suspicious area denoted as
$\Omega_{E}\triangleq\left\{ \left(x_E,y_E\right)^T | x_E\in {\Delta}_{x_E}, y_E \in {\Delta}_{y_E}\right\}$, i.e., $\boldsymbol{\omega}_E \in \Omega_E$, where $\Delta_{x_E}\triangleq\left[x^{min}_E, x^{max}_E\right]$ and
$\Delta_{y_E}\triangleq\left[y^{min}_E, y^{max}_E\right]$ are suspicious ranges along the x- and y-axes. This is a typical scenario of many applications. For example, in home wireless access such as WiFi in an apartment, suspicious area is the neighbors rooms which is a priori known.

We consider quasi-static flat-fading channels between all nodes. The received signals at Bob
and Eve are expressed as
\begin{align}
y_b = {\bf{h}}_{IB}^H{\boldsymbol{\Phi}}{\bf{H}}_{AI}{\bf{f}}x + n_b,\
y_e = {\bf{h}}_{IE}^H{\boldsymbol{\Phi}}{\bf{H}}_{AI}{\bf{f}}x + n_e,
\end{align}
where $x$ is the transmitted signal following $\mathbb{E}\left\{|x|^2 \right\}=1$, ${\bf{f}}\in\mathbb{C}^{N_t\times 1}$ represents the beamformer of Alice, ${\bf{H}}_{AI}\in\mathbb{C}^{M\times N_t}, {\bf{h}}_{IB}\in\mathbb{C}^{M\times 1}$ and ${\bf{h}}_{IE}\in\mathbb{C}^{M\times 1}$ are the flat fading channel links of Alice-IRS, IRS-Bob and IRS-Eve respectively, $n_B$, $n_E$ represent noise at Bob and Eve, respectively, in which the entries are with zero-mean and variances ${\sigma_b}^2$
and ${\sigma_e}^2$ and we set ${\sigma_b}^2={\sigma_e}^2={\sigma}^2$. ${\boldsymbol{\Phi}} \triangleq {\rm diag}(e^{j\theta_1},\cdots,e^{j\theta_M})$ is the diagonal phase shift matrix for IRS with $e^{j\theta_i}$ the phase shift coefficient at reflecting element $i$ ($i=1,\cdots,M$). In addition, a controller is used to coordinate Alice and
IRS for channel acquisition and data transmissions \cite{Wang-20}.

The equivalent channels of Bob and Eve could be re-written as:
\begin{align}
{\bf{h}}_{IJ}^H\boldsymbol{\Phi}{{\bf{H}}}_{AI}{\bf{f}}=\boldsymbol{\phi}^H{{\bf{G}}}_{AJ}{\bf{f}},
\end{align}
where ${{\bf{G}}}_{AJ} \triangleq{\rm diag}\left({\bf{h}}_{IJ}\right){{\bf{H}}}_{AI}\in \mathbb{C}^{M\times N_t}, J\in \left\{B,E\right\}$ is the cascaded channel
from Alice to Bob or Eve via IRS, and $\boldsymbol{\phi}\triangleq(e^{j\theta_1},\cdots,e^{j\theta_M})^T \in
\mathbb{C}^{M\times 1}$.

\subsection{Channel Model}\label{channel_model}
Since IRS is usually deployed on high places with line-of-sight (LoS) path, we use angle-domain Rician fading to model all the channels.
Specifically, the channel from Alice to IRS can be expressed as
 \begin{align}
\notag
{\bf{H}}_{AI} \triangleq \sqrt {L_{AI}}{\left( \sqrt{\frac{\kappa}{\kappa+1}}{{{\bf{H}}}}_{AI}^{LoS} + \sqrt{\frac{1}{\kappa+1}}
{\bf{H}}_{AI}^{NLoS}\right)},
\end{align}
where $L_{AI}\triangleq L_0d_{AI}^{-\rho_{AI}}$, $\frac{1}{L_{AI}}$ is the large-scale path loss of Alice-IRS, $L_0 \triangleq\left( \frac{\lambda_c}{4\pi}\right)^2$ is constant with
$\lambda_c$ being wavelength of the center frequency of carrier \cite{Hong-21}\cite{Wu-19c}, $d_{AI}\triangleq\|\boldsymbol{\omega}_{A}-\boldsymbol{\omega}_{I}\|$
is the distance from Alice to IRS, while $\rho_{AI}$ is the corresponding path loss exponent respectively. The small-scale fading in Alice-IRS link
is assumed to be Rician fading with Rician factor $\kappa$, where LoS and NLoS components are denoted as ${{\bf{H}}}_{AI}^{LoS}$ and
${{\bf{H}}}_{AI}^{NLoS}$, respectively.  ${{\bf{H}}}_{AI}^{LoS}$ is denoted as
\begin{align}
\notag
{{\bf{H}}}_{AI}^{LoS} \triangleq& \boldsymbol{\alpha}_I(\theta_{AI})\boldsymbol{\alpha}_A^H(\varphi_{AI}),
\end{align}
where $\boldsymbol{\alpha}_A(\varphi_{AI})\in \mathbb{C}^{N_t\times 1}$ and $\boldsymbol{\alpha}_I(\theta_{AI})\in \mathbb{C}^{M\times 1}$ are array response vectors
at Alice and IRS with effective angles of departure (AOD) $\varphi_{AI} = {\rm{arccos}} \left(\frac{x_I}{||\boldsymbol{\omega}_I||}\right)$ and arrival (AOA) $\theta_{AI}= \pi - \varphi_{AI}$,
\begin{align}
\notag
\boldsymbol{\alpha}_A(\varphi_{AI})& \triangleq \left[1, \ \cdots ,\ e^{j(N_t-1)2\pi \frac{d_a}{\lambda_c} {\rm{cos}}
(\varphi_{AI})} \right]^H,\\
\notag
 \boldsymbol{\alpha}_I(\theta_{AI})& \triangleq \left[1,\ \cdots ,\ e^{j(M-1)2\pi \frac{d_r}{\lambda_c} {\rm{cos}}
(\theta_{AI})} \right]^H,
\end{align}
respectively where $d_a$ and $d_r$ are the distance between two adjacent Alice antennas and IRS elements and we set $d_a=d_r=\frac{\lambda_c}
{2}$. ${{\bf{H}}}_{AI}^{NLoS}$ is Rayleigh fading with entries followed $\mathcal{CN}(0,1)$.

Similarly, the channel from IRS to Bob ${\bf{h}}_{IB}$ and to Eve ${\bf{h}}_{IE}$ are given by
\begin{align}
\notag
{\bf{h}}_{IJ} &\triangleq \sqrt {L_{IJ}}{\left(\sqrt{\frac{\kappa}{\kappa+1}}{\bf{h}}_{IJ}^{LoS} + \sqrt{\frac{1}{\kappa+1}}{\bf{h}}_{IJ}^
{NLoS}\right)},\ J\in \left\{B,E\right\},
\end{align}
where $L_{IJ} \triangleq L_0d_{IJ}^{-\rho_{IJ}}$, $\frac{1}{L_{IJ}}$ is the large-scale path loss of IRS-Bob $(J=B)$ or IRS-Eve $(J=E)$ link,
$d_{IJ}\triangleq\|\boldsymbol{\omega}_{I}-\boldsymbol{\omega}_{J}\|$ is the distances of IRS-Bob or IRS-Eve with path loss exponents $\rho_{IJ}$. We set $\rho_{IB}=\rho_{IE}=\rho$. The small-scale fading is also assumed to be Rician fading with Rician factor $\kappa$. The LoS component
 ${\bf{h}}_{IJ}^{LoS}$ is given by:
\begin{align}
\notag
{\bf{h}}_{IJ}^{LoS} &\triangleq\boldsymbol{\alpha}_I(\varphi_{IJ})\triangleq\left[1, \ \cdots, \ e^{j(M-1)\pi {\rm{cos}}(\varphi_{IJ})} \right]^H,
 \ J\in \left\{B,E \right\},
\end{align}
where $\boldsymbol{\alpha}_I(\varphi_{IJ})\in \mathbb{C}^{M\times 1}$ is array response vector of IRS-Bob or IRS-Eve links, AoDs of which is denoted as $\varphi_{IJ}$,
i.e., $\varphi_{IJ}= {\rm{arccos}}\left(\frac{x_J-x_I}{\|\boldsymbol{\omega}_J-\boldsymbol{\omega}_I\|}\right)$. All the entries of NLoS components in ${\bf{h}}_{IJ}^{NLoS}$ are distributed as $\mathcal{CN}(0,1)$.

 We have to emphasize that different from the most existing works, in this work the location of IRS is not fixed but a parameter we want to optimize. Therefore, in our problem, the beamformer $\bf f$ and coefficients $\boldsymbol{\phi}$ are optimization variables adapted to the small-scale fading, and the location  $\boldsymbol{\omega}_{I}$  is also a variable adapted to the large-scale fading.
 Based on the above models, we have an critical observation that there are two CSI models with and without IRS location, i.e., the CSI models are different before and after IRS deployment.

1) When the location of IRS $\boldsymbol{\omega}_{I}$ is an optimizing variable that have not been determined, the path loss components $L_{AI}$, $L_{IJ}$, and LoS components ${\bf{H}}_{AI}^{LoS}$, ${\bf{h}}_{IJ}^{LoS}, J\in \left\{B,E\right\}$  are all functions of this variable, but all the instantaneous NLoS components  ${\bf{H}}_{AI}^{NLoS}$, ${\bf{h}}_{IJ}^{NLoS},  J\in \left\{B,E\right\}$ should be taken as unknown random variables whose entries follow
$\mathcal{CN}(0,1)$. This is due to the fact that in order to estimate instantaneous CSI, the location of IRS must be determinated first and  channel training could then be performed. In this case, we retransform the CSI model of all the channels as
\begin{align}
\notag
{\bf{H}}_{AI}& =\underbrace{\sqrt{\frac{\kappa L_{AI}}{\kappa+1}}{\bf{H}}_{AI}^{LoS}}_{\overline{ {\bf{H}}}_{AI}}+\underbrace{\sqrt{\frac{L_{AI}}{\kappa+1}}{\bf{H}}_{AI}^{NLoS}}_{ \widetilde{\bf{H}}_{AI}},\\
\notag
{\bf{h}}_{IJ}& =\underbrace{\sqrt{\frac{\kappa L_{IJ}}{\kappa+1}}{\bf{h}}_{IJ}^{LoS}}_{\overline{ {\bf{h}}}_{IJ}}+ \underbrace{\sqrt{\frac{L_{IJ}}{\kappa+1}}{\bf{h}}_{IJ}^{NLoS}}_{ \widetilde{\bf{h}}_{IJ}},\ J\in \left\{B,E \right\},
\end{align}
where $\overline{ {\bf{H}}}_{AI}$ and $\overline{ {\bf{h}}}_{IJ},J\in \left\{B,E \right\}$ denote the known components in Alice-IRS and IRS-Bob/Eve link; $\widetilde{\bf{H}}_{AI}$ and $\widetilde{\bf{h}}_{IJ}$ denote the random components whose entries follow $\mathcal{CN}\left(0,\frac{L_{AI}}{\kappa+1}\right)$ and $\mathcal{CN}\left(0,\frac{L_{IJ}}{\kappa+1}\right)$. Then, the cascaded channel ${\bf{G}}_{AB}$ and ${\bf{G}}_{AE}$ is re-expressed as
\begin{align}
 \notag
{\bf G}_{AJ}&={\left[{{\rm diag}\left(\overline{ \bf{h}}_{IJ} \right) + {\rm diag}\left(\widetilde{\bf{h}}_{IJ}\right)}\right]}\left(\overline{ \bf{H}}_{AI}+\widetilde{\bf{H}}_{AI}\right)\\
\notag
&\overset{(b)}{\approx } \underbrace{{\rm diag}\left(\widetilde{\bf{h}}_{IJ}\right)\overline{\bf{H}}_{AI} + {\rm diag}\left(\overline{ \bf{h}}_{IJ} \right)\widetilde{\bf{H}}_{AI}}_{\widetilde{\bf{G}}_{AJ}}\\
\label{AllError}
&+\underbrace{{\rm diag}\left(\overline{ \bf{h}}_{IJ} \right)\overline{\bf{H}}_{AI}}_{\overline{ \bf{G}}_{AJ}}, \ J\in \left\{B,E \right\},
\end{align}
 where $(b)$ is obtained due to the fact that IRS is usually deployed in high places to avoid signal blockage, therefore the LoS component in Alice-IRS channel is expected to be the dominant factor and the product of the two random components ${\rm diag}\left(\widetilde{\bf{h}}_{IJ}\right)\widetilde{\bf{H}}_{AI}$ is too small, which could be ignored especially when the Rician factor is large. Now in $\eqref{AllError}$, $\overline{ \bf{G}}_{AJ}$ are determinate and known, while $\widetilde{\bf{G}}_{AJ}$ are random. Then, we analyze $\overline{ \bf{G}}_{AJ}$ and $\overline{\bf{H}}_{AI}$ in detail. Due to each element of $\overline{ \bf{h}}_{IJ}$ and $\overline{\bf{H}}_{AI}$ with constant modulus $\sqrt{\frac{\kappa L_{IJ}}{\kappa+1}} $ and  $\sqrt{\frac{\kappa L_{AI}}{\kappa+1}}$, so each element of $\overline{ \bf{G}}_{AJ}$ also has the constant modulus $\frac{\kappa \sqrt{ L_{AI}L_{IJ}}}{\kappa+1}$. Proposition 1 implies that each element of $\widetilde{\bf{G}}_{AJ}$ follows $\mathcal{CN}\left(0,\frac{2\kappa L_{AI}L_{IJ}}{\left(\kappa+1\right)^2}\right)$. Hence, the cascaded channel $\widetilde{\bf{G}}_{AJ}$ still is a Rician channel.
 \vspace{0ex}
 \begin{prop}
If $\widetilde{\bf{H}}_{AI}$ and $\widetilde{\bf{h}}_{IJ}, J\in \left\{B,E \right\}$ follow cyclic symmetric complex Gaussian (CSCG) distribution and $\overline{\bf{H}}_{AI}$ and $\overline{\bf{h}}_{IJ}$ are determinate,
$\widetilde{\bf{G}}_{AJ}\triangleq{\rm diag}\left(\widetilde{\bf{h}}_{IJ}\right)\overline{\bf{H}}_{AI} + {\rm diag}\left(\overline{ \bf{h}}_{IJ} \right)\widetilde{\bf{H}}_{AI}$ is still CSCG distribution whose elements follow $\mathcal{CN}\left(0,\frac{2\kappa L_{AI}L_{IJ}}{\left(\kappa+1\right)^2}\right)$.
\end{prop}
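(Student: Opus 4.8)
The plan is to prove the claim entry-by-entry and then lift it to the whole matrix. First I would fix indices $m\in\{1,\dots,M\}$ and $n\in\{1,\dots,N_t\}$ and read off the $(m,n)$ entry of $\widetilde{\bf{G}}_{AJ}$ directly from its definition. Since ${\rm diag}\left(\widetilde{\bf{h}}_{IJ}\right)$ and ${\rm diag}\left(\overline{\bf{h}}_{IJ}\right)$ are diagonal, left-multiplication merely scales each row, so
\begin{align}
\notag
\left[\widetilde{\bf{G}}_{AJ}\right]_{m,n}=\left[\widetilde{\bf{h}}_{IJ}\right]_m\left[\overline{\bf{H}}_{AI}\right]_{m,n}+\left[\overline{\bf{h}}_{IJ}\right]_m\left[\widetilde{\bf{H}}_{AI}\right]_{m,n}.
\end{align}
This exhibits every entry as a sum of two terms, each a deterministic complex scalar multiplying one scalar CSCG random variable.

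Next I would invoke two elementary closure properties of the CSCG family: (i) if $z\sim\mathcal{CN}(0,\sigma^2)$ and $c\in\mathbb{C}$ is deterministic, then $cz\sim\mathcal{CN}(0,|c|^2\sigma^2)$; and (ii) the sum of two \emph{independent} CSCG variables is CSCG with variance equal to the sum of the variances. For the first term, $\left[\widetilde{\bf{h}}_{IJ}\right]_m\sim\mathcal{CN}\!\left(0,\frac{L_{IJ}}{\kappa+1}\right)$ while $\left[\overline{\bf{H}}_{AI}\right]_{m,n}$ has constant modulus $\sqrt{\frac{\kappa L_{AI}}{\kappa+1}}$, so by (i) the product is $\mathcal{CN}\!\left(0,\frac{\kappa L_{AI}L_{IJ}}{(\kappa+1)^2}\right)$; the second term is symmetric in the roles of the two links and yields the same variance. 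Because $\widetilde{\bf{h}}_{IJ}$ and $\widetilde{\bf{H}}_{AI}$ are independent random components, the two terms are independent, and (ii) then gives $\left[\widetilde{\bf{G}}_{AJ}\right]_{m,n}\sim\mathcal{CN}\!\left(0,\frac{2\kappa L_{AI}L_{IJ}}{(\kappa+1)^2}\right)$, which is exactly the claimed marginal.

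The step that deserves the most care — and which is the real content behind properties (i) and (ii) — is verifying that circular symmetry, not merely zero mean and the correct variance, is preserved. I would therefore track the pseudo-covariance $\mathbb{E}[w^2]$ alongside $\mathbb{E}[|w|^2]$: for a CSCG $z$ one has $\mathbb{E}[z^2]=0$, whence $\mathbb{E}[(cz)^2]=c^2\mathbb{E}[z^2]=0$ for any deterministic $c$, securing (i); and for independent zero-mean $z_1,z_2$ one has $\mathbb{E}[(z_1+z_2)^2]=\mathbb{E}[z_1^2]+\mathbb{E}[z_2^2]=0$, securing (ii). Finally, to promote the entry-wise statement to the matrix-level claim, I would observe that ${\rm vec}\!\left(\widetilde{\bf{G}}_{AJ}\right)$ is a single deterministic linear map applied to the stacked vector $\left[\widetilde{\bf{h}}_{IJ}^T,\ {\rm vec}\!\left(\widetilde{\bf{H}}_{AI}\right)^T\right]^T$, which is jointly CSCG as a concatenation of independent CSCG blocks; since any deterministic linear transformation of a jointly CSCG vector is again jointly CSCG, $\widetilde{\bf{G}}_{AJ}$ is CSCG as a whole, carrying the per-entry marginals computed above. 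I do not expect the variance bookkeeping to be the obstacle; the only genuinely substantive point is the circular-symmetry/pseudo-covariance check, which is precisely why the CSCG hypothesis, rather than a generic complex Gaussian assumption, is essential to the statement.
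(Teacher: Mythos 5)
Your proof is correct and follows essentially the same route as the paper's Appendix A: write each entry of $\widetilde{\bf{G}}_{AJ}$ as a deterministic scalar times a CSCG variable plus an independent term of the same form, each with variance $\frac{\kappa L_{AI}L_{IJ}}{(\kappa+1)^2}$, and sum to obtain $\mathcal{CN}\left(0,\frac{2\kappa L_{AI}L_{IJ}}{(\kappa+1)^2}\right)$. The only difference is that you make explicit two points the paper leaves implicit --- the pseudo-covariance check that circular symmetry survives scaling and independent summation, and the lift from entry-wise marginals to a jointly CSCG matrix via a deterministic linear map --- which adds rigor without changing the approach.
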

\vspace{0ex}
\begin{proof}
See Appendix A.
\end{proof}

2) When the location of IRS $\boldsymbol{\omega}_{I}=\left (x_I,y_I\right)^T$ is fixed, the instantaneous CSI of ${\bf{H}}_{AI}$ and ${\bf{h}}_{IB}$ are available to Alice via pilot channel training and estimation, and $L_{IE}$ and ${\bf{h}}_{IE}^{LoS}$ in IRS-Eve links are also available since the location of Eve is known. Now only ${\bf{h}}_{IE}^{NLoS}$ are random variables following Rayleigh distribution. The IRS-Eve link is reexpressed as ${\bf{h}}_{IE} = \overline{ {\bf{h}}}_{IE} + \widetilde{\bf{h}}_{IE}$,
where $\overline{ {\bf{h}}}_{IE} \triangleq\sqrt{\frac{\kappa L_{IE}}{\kappa+1}}{\bf{h}}_{IE}^{LoS}$ is determinate component and $\widetilde{\bf{h}}_{IE}\triangleq\sqrt{\frac{L_{IE}}{\kappa+1}}{\bf{h}}_{IE}^{NLoS}$ is random component. Then, the cascaded channel ${{\bf{G}}}_{AE}$ is reexpressed as
\begin{align}
\label{EveError}
{\bf G}_{AE}&={\left[{{\rm diag}\left(\overline{ \bf{h}}_{IE} \right) + {\rm diag}\left(\widetilde{\bf{h}}_{IE}\right)}\right]}
{\bf{H}}_{AI}\triangleq\overline{ {{\bf{G}}}}_{AE} + \widetilde{{\bf{G}}}_{AE},
\end{align}
where $\overline{ \bf{G}}_{AE}\triangleq{\rm diag}\left(\overline{ \bf{h}}_{IE} \right){\bf{H}}_{AI}$ are determinate, while $\widetilde{\bf{G}}_{AE}\triangleq{\rm diag}\left(\widetilde{\bf{h}}_{IE}\right){\bf{H}}_{AI}$ are random and only its CDI is known. Similarly as (\ref{AllError}), according to proposition 1, we derive that i.d.d elements of $\widetilde{{\bf{G}}}_{AE}$ distribute as $\mathcal{CN}\left(0,\frac{\kappa L_{AI}L_{IE}}{(\kappa+1)^2}\right)$.

The above two observations on the CSI models are critical in establishing and solving the following optimization problem, as will be seen later.

When the location of Eve is unknown but only some suspicious area is known where an Eve may exist, neither LoS nor NLoS component in IRS-Eve links is known. Since $\widetilde{{\bf{G}}}_{AE}$ in both (\ref{AllError}) and (\ref{EveError}) is related to location of IRS and Eve, for this case, the basic idea is to
find the deployment location and the worst location of Eve together based on the CSI model as (\ref{AllError}) at first. Once the locations of IRS and Eve are determined, the following procedure will have the CSI model as (\ref{EveError}).

\subsection{ Problem Formulation}
In this paper, we aim to jointly optimize the location of IRS, transmit beamformer at Alice, phase shifts at IRS to maximize the achievable secrecy rate (SR) under the outage probability constraint in two cases where Alice knows the location of Eve and suspicious area of Eve. Based on the above model, the achievable rates of Bob and Eve are expressed as
\begin{align}
\label{Rb}
C_J(\boldsymbol{\omega}_I,{\bf{f}},\boldsymbol{\phi})&= \log\left(1+\frac{|\boldsymbol{\phi}^H{{\bf{G}}}_{AJ}{\bf{f}}|^2}{{\sigma}^2}\right),J\in\left\{B,E\right\}
\end{align}
the achievable SR $R_s$ is $R_s(\boldsymbol{\omega}_I,{\bf{f}},\boldsymbol{\phi})\triangleq C_B(\boldsymbol{\omega}_I,{\bf{f}},\boldsymbol{\phi})-C_E(\boldsymbol{\omega}_I,{\bf{f}},\boldsymbol{\phi})$. The NLoS component in ${\bf{G}}_{AE}$ is unknown so we use outage probability to evaluate the secrecy performance.

1) When the location of Eve is perfectly known at Alice, the problem can be formulated as
\begin{subequations}\label{ PerfectLocation}
\begin{align}
\max \limits_{{\bf{f}},\boldsymbol{\phi},\boldsymbol{\omega}_I} &\ R \label{ PerfectLocationa}\\
 {\rm{s.t.}} &\ {\rm{Pr}}\left\{R_s(\boldsymbol{\omega}_I,{\bf{f}},\boldsymbol{\phi})\leq R\right\} \leq p_{out},\label{ PerfectLocationb}\\
&\ \|{\bf{f}}\|^2 \leq P,\ |\boldsymbol{\phi}_{i}|=1,\ i=1,\cdots,M,\ \boldsymbol{\omega}_I \in \Omega_{I}, \label{ PerfectLocationc}
\end{align}
\end{subequations}
where  $R$ are the target SR, $\left(\ref{ PerfectLocationb}\right)$ can be seen as the secrecy outage constraint with outage probability $p_{out}$, $\|{\bf{f}}\|^2 \leq P$ is the total power constraint at Alice with total transmit power budget $P$, the unit modulus constraint $|\boldsymbol{\phi}_{i}|=1$ ensures that each reflecting element in IRS does not change the amplitude of signal and $\boldsymbol{\omega}_I \in \Omega_{I}$ represents the location area of IRS.

2) When the location of Eve is unknown, we denote a suspicious area of Eve which is a suspicious area where Eve may exist. To guarantee the security, we consider the worst case that  the information leakage to Eve is maximum, which means we want to maximize the minimum SR. The problem can be formulated as
\begin{align}
\label{NL}
\max \limits_{{\bf{f}},\boldsymbol{\phi},\boldsymbol{\omega}_I} \ \min \limits_{\boldsymbol{\omega}_E} &\ R,
{\rm{s.t.}} \ (\ref{ PerfectLocationb}),\ (\ref{ PerfectLocationc}),\ \boldsymbol{\omega}_E \in \Omega_{E},
\end{align}
where $\boldsymbol{\omega}_E \in \Omega_{E}$ represents the constraint of suspicious area of Eve.

\section{ Problem Solution With Perfect Location Of Eve}\label{Two-Stage}
In this section, we propose an efficient algorithms to maximize SR under the case where Alice knows the location of Eve. For problem (\ref{ PerfectLocation}),
we find that both the objective function (\ref{ PerfectLocationa}) and each constraint in (\ref{ PerfectLocationc}) are only related with one optimization variable. However, for the constraint (\ref{ PerfectLocationb}), $\boldsymbol{\omega}_I$ impacts not only  the large-scale path loss $L_{AI}, L_{IB}, L_{IE}$ but also LoS components ${\bf{H}}_{AI}^{LoS}, {\bf{h}}_{IB}^{LoS}, {\bf{h}}_{IE}^{LoS}$. All the optimization variables are coupled in (\ref{ PerfectLocationb}). At a first glance, it seems that an alternating optimization (AO) algorithm could be tried to solve the optimization problem, where in each iteration only one optimization variable is optimized by taking the other variables fixed as the values at the last iteration. However, we will show that this is not the fact.

As described in Section \ref{channel_model}, for optimizing $\boldsymbol{\omega}_I$ when it is taken as an optimization variable, the instantaneous CSI ${\bf{H}}_{AI}^{NLoS}, {\bf{h}}_{IB}^{NLoS}, {\bf{h}}_{IE}^{NLoS}$ are unavailable for Alice, which should be taken as random variables. In this case the CSI model is expressed as (\ref{AllError}). However, once $\boldsymbol{\omega}_I$ is fixed and for optimizing ${\bf f}$ and $\boldsymbol{\phi}$, only ${\bf{h}}_{IE}^{NLoS}$ is unavailable and the CSI model now is expressed as (\ref{EveError}). In a word, for optimizing different variables, the secrecy outage constraint (\ref{ PerfectLocationb}) has different CSI models. Hence, the AO method could not be applied directly to solve this optimization problem. Therefore, we show a critical observation that CSI models are different before and after IRS deployment, thus the joint optimization problem  (\ref{ PerfectLocation}) could be naturally solved thought a two-stage optimization framework: first to search for an appropriate location of IRS, and  then fix IRS at this location, estimates the instantaneous CSI of Alice-IRS-Bob link, and finally optimize the beamformer of Alice and phase shifts of IRS.

Mathematically, in the first stage, we optimize $\boldsymbol{\omega}_I$ to obtain an optimal location of IRS with the CSI model shown in (\ref{AllError}). Then, in the second stage, we adjust ${\bf f}$ and $\boldsymbol{\phi}$ to maximize the SR based on this optimized $\boldsymbol{\omega}_I$ subject to all the constraints with the CSI model shown in (\ref{EveError}).

\subsection{ Optimize Location of IRS}\label{Loca}

In this subsection, we optimize $\boldsymbol{\omega}_I$ to obtain an optimal location of IRS. When $\boldsymbol{\omega}_I$ is a variable, only distribution of NLoS components in all the channels could be available and the CSI model is described in (\ref{EveError}). The sub-problem for optimizing $\boldsymbol{\omega}_I$ is expressed as
\begin{align}
\max \limits_{\boldsymbol{\omega}_I \in \Omega_{I}} &\ R ,\
 {\rm{s.t.}} \ {\rm{Pr}}\left\{R_s(\boldsymbol{\omega}_I,{\bf{f}},\boldsymbol{\phi})\leq R \right\} \leq p_{out},\label{location1}
\end{align}
where $R_s(\boldsymbol{\omega}_I,{\bf{f}},\boldsymbol{\phi})\triangleq C_B(\boldsymbol{\omega}_I,{\bf{f}},\boldsymbol{\phi})-C_E(\boldsymbol{\omega}_I,{\bf{f}},\boldsymbol{\phi})$. Recalling (\ref{Rb}), since both ${\bf{G}}_{AB}$ and ${\bf{G}}_{AE}$ have random components, both $C_B(\boldsymbol{\omega}_I,{\bf{f}},\boldsymbol{\phi})$ and $C_E(\boldsymbol{\omega}_I,{\bf{f}},\boldsymbol{\phi})$ in $R_s(\boldsymbol{\omega}_I,{\bf{f}},\boldsymbol{\phi})$ have random components. According to Theorem 1 in \cite{Yan-16}, the secrecy outage probability for given a $R$ is shown in (\ref{SOP}), where $\Gamma(\cdot)$ is Gamma function, $\Gamma_G(\cdot)$ is the generalized gamma function, and all the parameters $\widetilde{m}_{B}$, $\widetilde{m}_{E}$, $\widetilde{\gamma}_{B}$, $\widetilde{\gamma}_{E}$, $\widetilde{\bar{\gamma}}_{B}$ and $\widetilde{\bar{\gamma}}_{E}$ are
 functions of  locations of Bob and Eve, beamformer and phase shift. The details are shown in \cite{Yan-16}. We find that the secrecy outage probability involves two infinite series, so problem (\ref{location1}) is hard to solve directly, if not impossible. To make this problem tractable, we transform it as
 \begin{subequations}\label{AO_method1}
 \begin{align}
\max \limits_{\boldsymbol{\omega}_I \in \Omega_{I}} &\ R_B-R_E,\\
\notag
 \ {\rm{s.t.}} &\ {\rm{Pr}}\left\{C_E(\boldsymbol{\omega}_I,{\bf{f}},\boldsymbol{\phi})\geq R_E \right\} \leq p_{out},\\ &{\rm{Pr}}\left\{C_B(\boldsymbol{\omega}_I,{\bf{f}},\boldsymbol{\phi}) \leq R_B\right\} \leq p_{out},\label{out}
\end{align}
\end{subequations}
where $R_B$ and $R_E$ are two newly auxiliary optimizing variables, which represent the target rates of Bob and Eve. The constraint (\ref{out}) is more strict than (\ref{location1}). This is because the constraint (\ref{location1}) includes the data transmission error of Bob where $C_B(\boldsymbol{\omega}_I,{\bf{f}},\boldsymbol{\phi})\leq R_B$, hence the outage probability of it is larger than that of (\ref{out}). It means that for given any $R_B$, $R_E$ and $p_{out}$, we always obtain the optimal $R\leq R_B-R_E$. If problem (\ref{out}) holds, problem (\ref{location1}) must hold. Hence we use (\ref{out}) to replace constraint in (\ref{location1}) which provides an achievable SR lower bound of the original problem.

\begin{figure*}[ht]
\begin{align}
\label{SOP}
{\rm{Pr}}\left\{R_s\leq R \right\}&=\frac{\tilde{m}_{B}^{\tilde{m}_{B}} \tilde{m}_{E}^{\tilde{m}_{E}} 2^{\widetilde{m}_{B} R}}{\Gamma\left( \tilde{m}_{E}\right) \widetilde{\gamma}_{B}^{- \widetilde{m}_{E} \tilde{\gamma}_{E}^{-\widetilde{m}_{B}}} \sum_{n=0}^{+\infty} \frac{2^{n R} \exp \left(-\frac{\widetilde{m}_{B}\left(2^{R}-1\right)}{\tilde{\gamma}_{B}}\right)}{\widetilde{m}_{B}^{-n} \widetilde{\gamma}_{B}^{n} \Gamma\left(\tilde{m}_{B}+n+1\right)}} \sum_{l=0}^{+\infty} \frac{\left(\begin{array}{c}
\tilde{m}_{B}+n \\
\end{array}\right)\left(2^{R}-1\right)^{l}\left(\widetilde{\bar{\gamma}}_{B} \widetilde{\bar{\gamma}}_{E}\right)^{n-l} \Gamma_{G}\left(\tilde{m}_{B}+ \tilde{m}_{E}+n-l\right)}{2^{l R}\left(2^{R} \tilde{m}_{B} \widetilde{\bar{\gamma}}_{E}+\tilde{m}_{E} \widetilde{\bar{\gamma}}_{B}\right)^{\widetilde{m}_{B}+\widetilde{m}_{E}+n-l}},
\end{align}
{\noindent}	 \rule[-6pt]{18cm}{0.05em}\\
\end{figure*}

We now focus on solving problem $(\ref{AO_method1})$. We will analyze the  outage probability constraint (\ref{out}) in detail. Recalling (\ref{AllError}), the rates of Bob and Eve $C_J(\boldsymbol{\omega}_I,{\bf{f}},\boldsymbol{\phi})$ are transformed as
\begin{align}
\label{rate}
C_J(\boldsymbol{\omega}_I,{\bf{f}},\boldsymbol{\phi})
= \log\left(1+\frac{\big |\boldsymbol{\phi}^H \left(\overline{\bf{G}}_{AJ}+\widetilde{\bf{G}}_{AJ} \right){\bf{f}}\Big |^2}{{\sigma}^2}\right), J\in\left\{ B,E\right\},
\end{align}
where $\overline{ \bf{G}}_{AJ}$ are determinate whose elements have the constant modulus $\frac{\kappa \sqrt{L_{AI}L_{IJ}}}{\kappa+1}$, while $\widetilde{\bf{G}}_{AJ}$ are random whose elements follow $\mathcal{CN}\left(0,\frac{2\kappa L_{AI}L_{IJ}}{\left(\kappa+1\right)^2}\right)$ based on proposition 1. Since the SR is monotonically increasing with transmit power, so the total power $P$ should be used, i.e., $\|{\bf{f}} \|^2=P$ at the optimal solution. Then we further have the following proposition 2.
\vspace{0ex}
\begin{prop}
If each element of $\widetilde{\bf{G}}_{AJ}, J\in\left\{ B,E\right\}$ follows $\mathcal{CN}\left(0,\frac{2\kappa L_{AI}L_{IJ}}{\left(\kappa+1\right)^2}\right)$ and transmit power is fixed, i.e., $\|{\bf{f}} \|^2=P$, $\boldsymbol{\phi}^H \widetilde{\bf{G}}_{AJ}{\bf{f}}$ follows $\mathcal{CN}\left({ 0},\frac{2\kappa L_{AI}L_{IJ}MP}{\left(\kappa+1\right)^2}\right)$ for any $\boldsymbol{\phi}$ and ${\bf f}$.
\end{prop}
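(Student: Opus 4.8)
The plan is to recognize the scalar $\boldsymbol{\phi}^H\widetilde{\bf{G}}_{AJ}{\bf f}$ as a deterministic linear functional of the random entries of $\widetilde{\bf{G}}_{AJ}$ and then exploit the closure of the CSCG family under linear maps, so that the statement reduces to computing a mean (which is zero) and a single variance.

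First I would vectorize the expression. Applying the identity ${\rm vec}\left({\bf A}{\bf X}{\bf B}\right)=\left({\bf B}^{T}\otimes{\bf A}\right){\rm vec}\left({\bf X}\right)$ with ${\bf A}=\boldsymbol{\phi}^H$, ${\bf X}=\widetilde{\bf{G}}_{AJ}$ and ${\bf B}={\bf f}$, and noting that the left-hand side is already a scalar, I obtain
\begin{align}
\notag
\boldsymbol{\phi}^H\widetilde{\bf{G}}_{AJ}{\bf f}=\left({\bf f}^{T}\otimes\boldsymbol{\phi}^H\right){\rm vec}\left(\widetilde{\bf{G}}_{AJ}\right).
\end{align}
By Proposition 1 the entries of $\widetilde{\bf{G}}_{AJ}$ are i.i.d. $\mathcal{CN}\left(0,\sigma_G^2\right)$ with $\sigma_G^2\triangleq\frac{2\kappa L_{AI}L_{IJ}}{\left(\kappa+1\right)^2}$, hence ${\rm vec}\left(\widetilde{\bf{G}}_{AJ}\right)$ is a zero-mean CSCG vector with covariance $\sigma_G^2{\bf I}_{MN_t}$. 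Since a deterministic linear transformation of a CSCG vector is again CSCG, it follows at once that $\boldsymbol{\phi}^H\widetilde{\bf{G}}_{AJ}{\bf f}$ is zero-mean CSCG for every admissible $\boldsymbol{\phi}$ and ${\bf f}$.

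It then remains only to evaluate the variance. Using $\left({\bf A}\otimes{\bf B}\right)\left({\bf C}\otimes{\bf D}\right)=\left({\bf A}{\bf C}\right)\otimes\left({\bf B}{\bf D}\right)$ together with the covariance above, I would write
\begin{align}
\notag
{\rm Var}\left(\boldsymbol{\phi}^H\widetilde{\bf{G}}_{AJ}{\bf f}\right)=\sigma_G^2\left({\bf f}^{T}\otimes\boldsymbol{\phi}^H\right)\left({\bf f}^{*}\otimes\boldsymbol{\phi}\right)=\sigma_G^2\,\|{\bf f}\|^2\,\|\boldsymbol{\phi}\|^2.
\end{align}
The last step is to insert the two constraints: the power constraint supplies $\|{\bf f}\|^2=P$, while the unit-modulus constraint $|\boldsymbol{\phi}_i|=1$ gives $\|\boldsymbol{\phi}\|^2=\sum_{i=1}^{M}|\boldsymbol{\phi}_i|^2=M$. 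Substituting leaves the variance $\sigma_G^2MP=\frac{2\kappa L_{AI}L_{IJ}MP}{\left(\kappa+1\right)^2}$, as claimed.

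In truth there is no serious obstacle here, since the whole statement is a direct consequence of the invariance of the CSCG class under linear transformations. The only point deserving minor care is the evaluation of $\|\boldsymbol{\phi}\|^2$: it is the unit-modulus (not unit-norm) structure of the phase-shift vector that produces the factor $M$, and this is precisely where the number of reflecting elements enters the variance. An entirely equivalent route, which avoids the Kronecker identities, is to expand $\boldsymbol{\phi}^H\widetilde{\bf{G}}_{AJ}{\bf f}=\sum_{m,n}\boldsymbol{\phi}_m^{*}\left[\widetilde{\bf{G}}_{AJ}\right]_{mn}{\bf f}_n$ and sum the independent per-entry variances directly.
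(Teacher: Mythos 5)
Your route (vectorize the scalar and invoke closure of the CSCG family under deterministic linear maps) is genuinely different from the paper's, which instead expands $\boldsymbol{\phi}^H \widetilde{\bf{G}}_{AJ}{\bf{f}} = \widetilde{\bf{h}}_{IJ}^H\boldsymbol{\Phi}\overline{\bf{H}}_{AI}{\bf{f}} + \overline{\bf{h}}_{IJ}^H\boldsymbol{\Phi}\widetilde{\bf{H}}_{AI}{\bf{f}}$ and treats the two random terms separately via the unitary invariance of $\boldsymbol{\Phi}$. However, your proof has a genuine gap at its first substantive step: the claim that ${\rm vec}\left(\widetilde{\bf{G}}_{AJ}\right)$ has covariance $\sigma_G^2{\bf{I}}_{MN_t}$. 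Proposition 1 establishes only the \emph{marginal} law of each entry of $\widetilde{\bf{G}}_{AJ}$; it says nothing about joint independence, and independence is in fact false here. Writing $\widetilde{\bf{G}}_{AJ}={\rm diag}\left(\widetilde{\bf{h}}_{IJ}\right)\overline{\bf{H}}_{AI}+{\rm diag}\left(\overline{\bf{h}}_{IJ}\right)\widetilde{\bf{H}}_{AI}$, the second term does have i.i.d. entries, but in the first term the $m$-th row equals the single Gaussian scalar $\widetilde{h}_{IJ,m}$ times the $m$-th row of the deterministic matrix $\overline{\bf{H}}_{AI}$, so all $N_t$ entries of that row are unit-modulus multiples of one and the same random variable and are perfectly correlated; moreover $\overline{\bf{H}}_{AI}\propto\boldsymbol{\alpha}_I\boldsymbol{\alpha}_A^H$ is rank one, which makes the cross-row correlation structured as well. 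Hence the covariance of the vectorization is not a scaled identity, and the formula ${\rm Var}=\sigma_G^2\|{\bf{f}}\|^2\|\boldsymbol{\phi}\|^2$ does not follow. Your "equivalent route" at the end (summing independent per-entry variances) commits the same error explicitly.

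Concretely, carrying the first term through with its true covariance gives a contribution $\frac{\kappa L_{AI}L_{IJ}M}{\left(\kappa+1\right)^2}\left|\boldsymbol{\alpha}_A^H{\bf{f}}\right|^2$ to the variance, which depends on the \emph{direction} of ${\bf{f}}$ (ranging from $0$ to $N_tP$ times the constant) and not merely on $\|{\bf{f}}\|^2=P$; only the second term robustly contributes $\frac{\kappa L_{AI}L_{IJ}MP}{\left(\kappa+1\right)^2}$ for every admissible $\boldsymbol{\phi}$ and ${\bf{f}}$. This is precisely the structural point the paper's decomposition is built to expose: the paper's treatment of the term $\overline{\bf{h}}_{IJ}^H\boldsymbol{\Phi}\widetilde{\bf{H}}_{AI}{\bf{f}}$ is airtight, while its treatment of $\widetilde{\bf{h}}_{IJ}^H\boldsymbol{\Phi}\overline{\bf{H}}_{AI}{\bf{f}}$ quietly makes the same independence slip (it assigns the row vector $\widetilde{\bf{h}}_{IJ}^H\boldsymbol{\Phi}\overline{\bf{H}}_{AI}$ an identity covariance when its true covariance is the rank-one matrix proportional to $\boldsymbol{\alpha}_A\boldsymbol{\alpha}_A^H$), which is why your bottom line agrees with the paper's. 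As a standalone argument, though, yours rests entirely on an i.i.d. assumption that Proposition 1 does not supply and that the channel construction contradicts; to repair it you would have to either restrict attention to beamformers satisfying $\left|\boldsymbol{\alpha}_A^H{\bf{f}}\right|^2=P$ or work with the true (non-identity) covariance of ${\rm vec}\left(\widetilde{\bf{G}}_{AJ}\right)$, in which case the "for any $\boldsymbol{\phi}$ and ${\bf{f}}$" claim no longer comes out.
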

\vspace{0ex}
\begin{proof}
See Appendix B.
\end{proof}
Proposition 2 shows that different $\boldsymbol{\phi}$ and ${\bf f}$ will not change the distribution of $\boldsymbol{\phi}^H \widetilde{\bf{G}}_{AJ}{\bf{f}}$. However, they still impact the value of $\boldsymbol{\phi}^H \overline{\bf{G}}_{AJ}{\bf{f}}$, which will impact the distributions of $C_J(\boldsymbol{\omega}_I,{\bf{f}},\boldsymbol{\phi}),J\in\left\{ B,E\right\}$ in  constraint (\ref{out}). Therefore, the challenge of solving problem $(\ref{AO_method1})$ still lies in how to handle constraint (\ref{out}).
Since our goal is to optimize the location  variable $\boldsymbol{\omega}_I$, we next further derive a unified upper bounds of constraint (\ref{out}), with which the outage probability of (\ref{out}) is only related to variable $\boldsymbol{\omega}_I$ but not to $\boldsymbol{\phi}$ and ${\bf f}$.
\begin{prop}
Outage probability ${\rm{Pr}}\left\{C_E(\boldsymbol{\omega}_I,{\bf{f}},\boldsymbol{\phi})\geq R_E \right\}$ could be upper bounded by
\begin{align}
\notag
 &{\rm{Pr}}\left\{C_E(\boldsymbol{\omega}_I,{\bf{f}},\boldsymbol{\phi})\geq R_E \right\}\\
 \label{outE}
 &\leq{\rm{Pr}}\left\{ \left(\frac{\kappa M\sqrt{N_tP}}{\kappa+1}+|\boldsymbol{\phi}^H\widetilde{\bf{G}}_{AE}^{small}{\bf{f}}|\right)^2\geq \frac{\left(2^{R_E}-1\right)\sigma^2}{L_{AI}L_{IE}} \right\}.
 \end{align}
  $\widetilde{\bf{G}}_{AE}^{small}\triangleq\frac{\sqrt{\kappa}}{\kappa+1}\left({\rm diag}\left({\bf{h}}_{IE}^{NLoS}\right){\bf{H}}_{AI}^{LoS} + {\rm diag}\left({\bf{h}}_{IE}^{LoS}\right){\bf{H}}_{AI}^{NLoS}\right)$, the elements of which follows $\mathcal{CN}\left({ 0},\frac{2\kappa M}{\left(\kappa+1\right)^2}\right)$ and any choice of $\boldsymbol{\phi}$ and ${\bf{f}}$ does not impact the distribution of $\boldsymbol{\phi}^H\widetilde{\bf{G}}_{AE}^{small}{\bf{f}}\sim \mathcal{CN}\left({ 0},\frac{2\kappa MP}{\left(\kappa+1\right)^2}\right)$.
\end{prop}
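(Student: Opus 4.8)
The plan is to turn the rate-outage event into a pure amplitude condition, split that amplitude into a deterministic line-of-sight (LoS) part and a random part by the triangle inequality, bound the deterministic part \emph{uniformly} over all feasible $\boldsymbol{\phi}$ and $\mathbf{f}$, and then pass from a pathwise amplitude inequality to the probability bound by event inclusion. First, using (\ref{rate}) and the monotonicity of $\log(1+\cdot)$, the event $\{C_E(\boldsymbol{\omega}_I,\mathbf{f},\boldsymbol{\phi})\ge R_E\}$ is exactly $\{|\boldsymbol{\phi}^H(\overline{\bf G}_{AE}+\widetilde{\bf G}_{AE})\mathbf{f}|^2\ge (2^{R_E}-1)\sigma^2\}$. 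The triangle inequality then gives the pathwise split $|\boldsymbol{\phi}^H(\overline{\bf G}_{AE}+\widetilde{\bf G}_{AE})\mathbf{f}|\le |\boldsymbol{\phi}^H\overline{\bf G}_{AE}\mathbf{f}|+|\boldsymbol{\phi}^H\widetilde{\bf G}_{AE}\mathbf{f}|$, isolating the known LoS term from the random one.

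The crux is a constant bound on the deterministic term that does not depend on $\boldsymbol{\phi}$ or $\mathbf{f}$. Here I would exploit the rank-one LoS structure: from (\ref{AllError}) and the definitions of $\overline{\bf h}_{IE}$ and $\overline{\bf H}_{AI}$, one has $\overline{\bf G}_{AE}=\frac{\kappa\sqrt{L_{AI}L_{IE}}}{\kappa+1}\,{\rm diag}({\bf h}_{IE}^{LoS})\boldsymbol{\alpha}_I(\theta_{AI})\boldsymbol{\alpha}_A^H(\varphi_{AI})$. Setting $\mathbf{u}\triangleq{\rm diag}({\bf h}_{IE}^{LoS})\boldsymbol{\alpha}_I(\theta_{AI})$, the bilinear form factorizes as $\boldsymbol{\phi}^H\overline{\bf G}_{AE}\mathbf{f}=\frac{\kappa\sqrt{L_{AI}L_{IE}}}{\kappa+1}(\boldsymbol{\phi}^H\mathbf{u})(\boldsymbol{\alpha}_A^H(\varphi_{AI})\mathbf{f})$. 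Two Cauchy--Schwarz steps, using the unit-modulus facts $\|\boldsymbol{\phi}\|=\|\mathbf{u}\|=\sqrt{M}$ and $\|\boldsymbol{\alpha}_A(\varphi_{AI})\|=\sqrt{N_t}$ together with $\|\mathbf{f}\|=\sqrt{P}$, give $|\boldsymbol{\phi}^H\overline{\bf G}_{AE}\mathbf{f}|\le \frac{\kappa M\sqrt{N_tP}}{\kappa+1}\sqrt{L_{AI}L_{IE}}$ for every feasible pair. For the random term I would factor the common large-scale gain: comparing $\widetilde{\bf G}_{AE}$ in (\ref{AllError}) with the stated $\widetilde{\bf G}_{AE}^{small}$ shows $\widetilde{\bf G}_{AE}=\sqrt{L_{AI}L_{IE}}\,\widetilde{\bf G}_{AE}^{small}$, hence $|\boldsymbol{\phi}^H\widetilde{\bf G}_{AE}\mathbf{f}|=\sqrt{L_{AI}L_{IE}}\,|\boldsymbol{\phi}^H\widetilde{\bf G}_{AE}^{small}\mathbf{f}|$.

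Combining the two bounds yields, pathwise, $|\boldsymbol{\phi}^H(\overline{\bf G}_{AE}+\widetilde{\bf G}_{AE})\mathbf{f}|\le\sqrt{L_{AI}L_{IE}}\bigl(\frac{\kappa M\sqrt{N_tP}}{\kappa+1}+|\boldsymbol{\phi}^H\widetilde{\bf G}_{AE}^{small}\mathbf{f}|\bigr)$. Since squaring is monotone on nonnegatives, the event that the left side squared exceeds $(2^{R_E}-1)\sigma^2$ is contained in the event that the right side squared exceeds the same threshold; monotonicity of probability and dividing through by $L_{AI}L_{IE}$ then give exactly (\ref{outE}). The distributional claims close the argument: the elements of $\widetilde{\bf G}_{AE}^{small}$ are zero-mean complex Gaussian by the same computation as Proposition 1 with the large-scale gains normalized to unity, and the invariance $\boldsymbol{\phi}^H\widetilde{\bf G}_{AE}^{small}\mathbf{f}\sim\mathcal{CN}\bigl(0,\frac{2\kappa MP}{(\kappa+1)^2}\bigr)$ follows from Proposition 2 under that same normalization.

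I expect the only genuine obstacle to be the deterministic amplitude bound. The temptation is to treat $\overline{\bf G}_{AE}$ as a generic matrix and bound via its operator norm, but that estimate would be both loose and dependent on the unknown maximizing $\boldsymbol{\phi}$; the rank-one LoS factorization is what makes the bound simultaneously tight in $M$ and $N_t$ and, crucially, free of $\boldsymbol{\phi}$ and $\mathbf{f}$, which is precisely what the first-stage location optimization needs. Everything else is routine bookkeeping on the Rician scalings and an elementary event-inclusion step.
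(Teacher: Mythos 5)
Your proof is correct and follows essentially the same route as the paper's Appendix C: triangle-inequality split of $|\boldsymbol{\phi}^H(\overline{\bf G}_{AE}+\widetilde{\bf G}_{AE}){\bf f}|$, a $(\boldsymbol{\phi},{\bf f})$-independent bound $\frac{\kappa M\sqrt{N_tP\,L_{AI}L_{IE}}}{\kappa+1}$ on the LoS term, factoring $\sqrt{L_{AI}L_{IE}}$ out of the random term to expose $\widetilde{\bf G}_{AE}^{small}$, and then event inclusion plus Propositions 1 and 2 for the distributional claims. The only cosmetic difference is that the paper gets the LoS bound via $\|\boldsymbol{\phi}\|^2\|\overline{\bf G}_{AE}\|_F^2\|{\bf f}\|^2$ using the constant modulus of the entries, whereas you use the rank-one factorization with two Cauchy--Schwarz steps; since $\overline{\bf G}_{AE}$ is rank one these yield exactly the same constant.
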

\begin{proof}
See Appendix C.
\end{proof}
For brevity, denote $\gamma_E \triangleq \frac{\kappa^2 M^2 N_tP}{\left(\kappa+1\right)^2}+|\boldsymbol{\phi}^H\widetilde{\bf{G}}_{AE}^{small}{\bf{f}}|^2+\frac{2\kappa M \sqrt{N_tP}}{\kappa+1}|\boldsymbol{\phi}^H\widetilde{\bf{G}}_{AE}^{small}{\bf{f}}|$.
With this proposition, $\gamma_E$ is a random variable as a function of random variable $\boldsymbol{\phi}^H\widetilde{\bf{G}}_{AE}^{small}{\bf{f}}$ and other constants. The distribution of $\boldsymbol{\phi}^H\widetilde{\bf{G}}_{AE}^{small}{\bf{f}}$ is irrelevant to $\boldsymbol{\phi}$ and ${\bf{f}}$, and so does the distribution of $\gamma_E$. Given $\kappa, M, N_t, P$, the cumulative distribution function of $\gamma_E$ is determinate, which could be denoted as $F_E(x)$. Then, we obtain $1-F\left(\frac{\left(2^{R_E}-1\right)\sigma^2}{L_{AI}L_{IE}}\right)\leq p_{out}$.
In fact, by denoting $\alpha_E \triangleq F^{-1}_E(1-p_{out})$, where $F^{-1}_E$ is the inverse cumulative distribution function, the constraint $(\ref{outE})\leq p_{out}$ can be equivalently transformed as $\alpha_E \leq \frac{\left(2^{R_E}-1\right)\sigma^2}{L_{AI}L_{IE}}$.
Note that if the constraint $\alpha_E \leq \frac{\left(2^{R_E}-1\right)\sigma^2}{L_{AI}L_{IE}}$ is satisfied, the original constraint ${\rm{Pr}}\left\{C_E(\boldsymbol{\omega}_I,{\bf{f}},\boldsymbol{\phi})\geq R_E \right\} \leq p_{out}$ in (\ref{out}) must be satified.  $\alpha_E$ could be calculated numerically in an off-line manner since $F_E(x)$ is determinate.

Next, we analyze the constraint ${\rm{Pr}}\left\{C_B(\boldsymbol{\omega}_I,{\bf{f}},\boldsymbol{\phi}) \leq R_B\right\}$ in (\ref{out}). We have proposition 4.
\begin{prop}
Outage probability ${\rm{Pr}}\left\{C_B(\boldsymbol{\omega}_I,{\bf{f}},\boldsymbol{\phi})\leq R_B \right\}$ could be upper bounded by
\begin{align}
\notag
 &{\rm{Pr}}\left\{C_B(\boldsymbol{\omega}_I,{\bf{f}},\boldsymbol{\phi})\leq R_B \right\}\\
 \label{outB}
 & \leq {\rm{Pr}}\left\{ \left( \frac{\kappa M\sqrt{N_tP}}{\kappa+1}-|\boldsymbol{\phi}^H\widetilde{\bf{G}}_{AB}^{small}{\bf{f}}| \right)^2\leq \frac{\left(2^{R_B}-1\right)\sigma^2}{L_{AI}L_{IB}} \right\}.
 \end{align}
 $\widetilde{\bf{G}}_{AB}^{small}\triangleq\frac{\sqrt{\kappa}}{\kappa+1}\left({\rm diag}\left({\bf{h}}_{IB}^{NLoS}\right){\bf{H}}_{AI}^{LoS} + {\rm diag}\left({\bf{h}}_{IB}^{LoS}\right){\bf{H}}_{AI}^{NLoS}\right)$, the elements of which follows $\mathcal{CN}\left({ 0},\frac{2\kappa M}{\left(\kappa+1\right)^2}\right)$ and any choice of $\boldsymbol{\phi}$ and ${\bf{f}}$ does not impact the distribution of $\boldsymbol{\phi}^H\widetilde{\bf{G}}_{AB}^{small}{\bf{f}}\sim \mathcal{CN}\left({ 0},\frac{2\kappa MP}{\left(\kappa+1\right)^2}\right)$.
\end{prop}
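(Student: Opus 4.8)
The plan is to mirror the proof of Proposition~3, but to replace the forward triangle inequality used there (which bounds Eve's ``signal-too-strong'' event from above) by the \emph{reverse} triangle inequality, since for Bob I must bound a ``signal-too-weak'' event. Everything else---factoring out the large-scale path loss and invoking Proposition~2 for the distribution of the random term---runs in parallel.

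First I would rewrite the outage event. Recalling $(\ref{rate})$, the event $C_B(\boldsymbol{\omega}_I,{\bf{f}},\boldsymbol{\phi})\le R_B$ is equivalent to $|\boldsymbol{\phi}^H{\bf{G}}_{AB}{\bf{f}}|^2\le(2^{R_B}-1)\sigma^2$. Because the large-scale factor $\sqrt{L_{AI}L_{IB}}$ is common to both the deterministic part $\overline{\bf{G}}_{AB}$ and the random part $\widetilde{\bf{G}}_{AB}$, I would pull it out and write $\boldsymbol{\phi}^H{\bf{G}}_{AB}{\bf{f}}=\sqrt{L_{AI}L_{IB}}\,\boldsymbol{\phi}^H(\overline{\bf{G}}_{AB}^{small}+\widetilde{\bf{G}}_{AB}^{small}){\bf{f}}$, where $\overline{\bf{G}}_{AB}^{small}\triangleq\frac{\kappa}{\kappa+1}{\rm diag}({\bf{h}}_{IB}^{LoS}){\bf{H}}_{AI}^{LoS}$ collects the deterministic part. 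The outage event then reads $|\boldsymbol{\phi}^H(\overline{\bf{G}}_{AB}^{small}+\widetilde{\bf{G}}_{AB}^{small}){\bf{f}}|^2\le\frac{(2^{R_B}-1)\sigma^2}{L_{AI}L_{IB}}$, which already exposes the target right-hand side.

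Next I would lower bound the signal amplitude by the reverse triangle inequality, $|\boldsymbol{\phi}^H(\overline{\bf{G}}_{AB}^{small}+\widetilde{\bf{G}}_{AB}^{small}){\bf{f}}|\ge\bigl|\,|\boldsymbol{\phi}^H\overline{\bf{G}}_{AB}^{small}{\bf{f}}|-|\boldsymbol{\phi}^H\widetilde{\bf{G}}_{AB}^{small}{\bf{f}}|\,\bigr|$, and then evaluate the deterministic gain. Since ${\bf{H}}_{AI}^{LoS}=\boldsymbol{\alpha}_I(\theta_{AI})\boldsymbol{\alpha}_A^H(\varphi_{AI})$ is rank one and ${\rm diag}({\bf{h}}_{IB}^{LoS})$ has unit-modulus entries, the gain factorizes as $\frac{\kappa}{\kappa+1}\,|\boldsymbol{\phi}^H{\rm diag}({\bf{h}}_{IB}^{LoS})\boldsymbol{\alpha}_I(\theta_{AI})|\cdot|\boldsymbol{\alpha}_A^H(\varphi_{AI}){\bf{f}}|$. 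The first factor is maximized to $M$ by a unit-modulus $\boldsymbol{\phi}$ phase-aligned with the $M$ unit-modulus entries, and the second to $\sqrt{N_tP}$ by the matched (MRT) beamformer under $\|{\bf{f}}\|^2=P$; hence $|\boldsymbol{\phi}^H\overline{\bf{G}}_{AB}^{small}{\bf{f}}|\le\frac{\kappa M\sqrt{N_tP}}{\kappa+1}$, with equality at the optimal beamforming. Substituting this value and using the reverse triangle bound gives the event inclusion $\{C_B\le R_B\}\subseteq\{(\frac{\kappa M\sqrt{N_tP}}{\kappa+1}-|\boldsymbol{\phi}^H\widetilde{\bf{G}}_{AB}^{small}{\bf{f}}|)^2\le\frac{(2^{R_B}-1)\sigma^2}{L_{AI}L_{IB}}\}$, and taking probabilities yields $(\ref{outB})$. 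Finally I would invoke Proposition~2 (with the path loss factored out) to record that $\boldsymbol{\phi}^H\widetilde{\bf{G}}_{AB}^{small}{\bf{f}}\sim\mathcal{CN}(0,\frac{2\kappa MP}{(\kappa+1)^2})$ regardless of $\boldsymbol{\phi}$ and ${\bf{f}}$, so that the bound depends on the design variables only through $L_{AI}L_{IB}$, i.e.\ only through $\boldsymbol{\omega}_I$.

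The delicate step is the direction of the inequality together with the substitution of the \emph{maximum} LoS gain $\frac{\kappa M\sqrt{N_tP}}{\kappa+1}$. In Proposition~3 upper bounding the deterministic gain is automatically conservative, because enlarging the signal can only enlarge a ``signal-too-strong'' event; for Bob the same replacement holds as a subset inclusion \emph{only} when the deterministic gain actually attains its maximum. A bound claimed uniformly over all $\boldsymbol{\phi},{\bf{f}}$ would in fact be vacuous---for instance, a $\boldsymbol{\phi}$ orthogonal to Bob's cascaded LoS makes the deterministic gain zero and leaves the outage event $\{|\boldsymbol{\phi}^H\widetilde{\bf{G}}_{AB}^{small}{\bf{f}}|^2\le\frac{(2^{R_B}-1)\sigma^2}{L_{AI}L_{IB}}\}$, which is not contained in the claimed event. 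I would therefore justify the substitution through the two-stage logic: the first stage sizes the location against Bob's \emph{best achievable} rate, which is exactly the rate realized by the matched beamformer and phase design carried out in the second stage, so replacing the deterministic gain by its maximum is the appropriate surrogate for the location search rather than a worst-case relaxation.
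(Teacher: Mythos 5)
Your proposal is correct and follows essentially the same route as the paper's proof: the reverse triangle inequality to lower-bound Bob's effective channel gain, substitution of the maximal LoS-matched deterministic gain $\frac{\kappa M\sqrt{N_tP}}{\kappa+1}$ attained by ${\bf f}^*=\sqrt{P/N_t}\,\boldsymbol{\alpha}_A(\varphi_{AI})$ and $\boldsymbol{\phi}^*={\rm diag}\left(\boldsymbol{\alpha}_I(\varphi_{IB})\right)\boldsymbol{\alpha}_I(\theta_{AI})$, and invariance of the distribution of $\boldsymbol{\phi}^H\widetilde{\bf{G}}_{AB}^{small}{\bf f}$ via Proposition 2. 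Your resolution of the delicate step is also the paper's: because Proposition 3 makes Eve's bound invariant to $(\boldsymbol{\phi},{\bf f})$, the design will be matched to Bob's LoS component, which the paper justifies by citing \cite{Yan-16} (outage probability decreases with the Rician factor, so the matched choice achieves the lowest outage probability), so the bound is to be read at that optimal choice rather than uniformly over all $(\boldsymbol{\phi},{\bf f})$ --- exactly the reading you propose.
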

\vspace{0ex}
\begin{proof}
 See Appendix D.
\end{proof}

Similarly, we denote $\gamma_B \triangleq \frac{\kappa^2 M^2 N_tP}{\left(\kappa+1\right)^2}+|\boldsymbol{\phi}^H\widetilde{\bf{G}}_{AB}^{small}{\bf{f}}|^2-\frac{2\kappa M \sqrt{N_tP}}{\kappa+1}|\boldsymbol{\phi}^H\widetilde{\bf{G}}_{AB}^{small}{\bf{f}}|$,
 denote the cumulative distribution function of $\gamma_B$ as $F_B(x)$, and $\alpha_B \triangleq F^{-1}_B(p_{out})$. It is not hard to see that
 the constraint $(\ref{outB})\leq p_{out}$ can be equivalently transformed as $\alpha_B \geq \frac{\left(2^{R_B}-1\right)\sigma^2}{L_{AI}L_{IB}}$, which could be calculated numerically in an off-line manner once $p_{out}$ is given. Note that if this constraint holds, the original constraint ${\rm{Pr}}\left\{C_B(\boldsymbol{\omega}_I,{\bf{f}},\boldsymbol{\phi})\leq R_B \right\} \leq p_{out}$ in (\ref{out}) must hold.

  After transforming (\ref{out}) as $\alpha_E\leq \frac{\left(2^{R_E}-1\right)\sigma^2}{L_{AI}L_{IE}}$ and $\alpha_B\geq \frac{\left(2^{R_B}-1\right)\sigma^2}{L_{AI}L_{IB}}$, we have following proposition.

\begin{prop}
(\ref{AO_method1}) can be equivalently transformed as
 \begin{align}
 \label{max_location}
 \max \limits_{\boldsymbol{\omega}_I} \ \frac{\sigma^2+\alpha_B L_{AI}L_{IB}}{\sigma^2+\alpha_E L_{AI}L_{IE}}, \ s.t.\ \boldsymbol{\omega}_I\in {\Omega}_I,
 \end{align}
where $L_{AI}=\frac{L_0 }{\|\boldsymbol{\omega}_{I}-\boldsymbol{\omega}_{A}\|^{\rho_{AI}}}$, $L_{IB}=\frac{L_0}{\|\boldsymbol{\omega}_{I}-\boldsymbol{\omega}_{B}\|^{\rho}}$ and $ L_{IE}=\frac{L_0}{\|\boldsymbol{\omega}_{I}-\boldsymbol{\omega}_{E}\|^{\rho}}$ are functions of $\boldsymbol{\omega}_{I}$.
\end{prop}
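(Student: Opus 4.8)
The plan is to start from the two deterministic surrogate constraints that have just been derived, namely $\alpha_E\leq(2^{R_E}-1)\sigma^2/(L_{AI}L_{IE})$ and $\alpha_B\geq(2^{R_B}-1)\sigma^2/(L_{AI}L_{IB})$, which by construction replace the probabilistic constraints in (\ref{out}), so that problem (\ref{AO_method1}) becomes an optimization over $\boldsymbol{\omega}_I$ together with the auxiliary variables $R_B,R_E$ subject only to these two inequalities. The key structural observation I would exploit is that $R_B$ and $R_E$ are now coupled to the location $\boldsymbol{\omega}_I$ only through the path-loss products $L_{AI}L_{IB}$ and $L_{AI}L_{IE}$, which are fixed once $\boldsymbol{\omega}_I$ is fixed. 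I would therefore view the problem as an inner maximization over $(R_B,R_E)$ nested inside an outer maximization over $\boldsymbol{\omega}_I\in\Omega_I$, and solve the inner problem in closed form for each fixed $\boldsymbol{\omega}_I$.

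First I would invert each surrogate constraint into a bound on the corresponding rate. The Eve constraint rearranges to $R_E\geq\log_2(1+\alpha_E L_{AI}L_{IE}/\sigma^2)$, and the Bob constraint rearranges to $R_B\leq\log_2(1+\alpha_B L_{AI}L_{IB}/\sigma^2)$; here $\alpha_B$ and $\alpha_E$ depend only on the fixed parameters $(\kappa,M,N_t,P,p_{out})$ and hence act as constants during the location search. Since the inner objective $R_B-R_E$ is strictly increasing in $R_B$ and strictly decreasing in $R_E$, both bounds are tight at the optimum, so the inner-optimal choices are $R_B^\star=\log_2(1+\alpha_B L_{AI}L_{IB}/\sigma^2)$ and $R_E^\star=\log_2(1+\alpha_E L_{AI}L_{IE}/\sigma^2)$.

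Substituting these back, the optimal inner value for a given $\boldsymbol{\omega}_I$ is $R_B^\star-R_E^\star=\log_2\!\left((\sigma^2+\alpha_B L_{AI}L_{IB})/(\sigma^2+\alpha_E L_{AI}L_{IE})\right)$, where I combine the two logarithms and clear the common factor $\sigma^2$. Because $\log_2(\cdot)$ is monotonically increasing, maximizing this expression over $\boldsymbol{\omega}_I\in\Omega_I$ is equivalent to maximizing its argument, which is precisely the objective of (\ref{max_location}); the feasible set is unchanged, since $\boldsymbol{\omega}_I\in\Omega_I$ is the only surviving constraint once the inner variables are eliminated. This establishes the claimed equivalence.

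I do not anticipate a genuinely hard step: the argument is an inner/outer decomposition followed by a monotonicity-of-logarithm reduction. The only points needing a line of care are confirming that no separate constraint $R_B^\star\geq R_E^\star$ must be imposed (the problem simply returns the resulting value of $R_B-R_E$, which may be negative when Eve's link is favorable), and noting that since $\alpha_B$ and $\alpha_E$ are location-independent constants, the dependence on $\boldsymbol{\omega}_I$ in (\ref{max_location}) enters solely through $L_{AI},L_{IB},L_{IE}$, exactly as stated in the proposition.
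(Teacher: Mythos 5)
Your proposal is correct and follows essentially the same route as the paper's proof: invert the two surrogate constraints into rate bounds $R_E\geq\log(1+\alpha_E L_{AI}L_{IE}/\sigma^2)$ and $R_B\leq\log(1+\alpha_B L_{AI}L_{IB}/\sigma^2)$, observe that both are tight at the optimum since the objective is monotone in $R_B$ and $R_E$, and then use monotonicity of the logarithm to reduce to maximizing the ratio $(\sigma^2+\alpha_B L_{AI}L_{IB})/(\sigma^2+\alpha_E L_{AI}L_{IE})$. Your added remarks (that $\alpha_B,\alpha_E$ are location-independent and that no $R_B\geq R_E$ constraint is needed) are fine points of care the paper leaves implicit, but the argument is the same.
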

\begin{proof}
 See Appendix E.
\end{proof}

Problem (\ref{max_location}) is non-convex due to the objective function. To solve it, we can globally search over $\boldsymbol{\omega}_I$ to obtain the optimal solution. However the computation complexity for this method is high, so we propose a successive convex approximation (SCA) method with a low complexity to solve (\ref{max_location}) and take this global search method as a benchmark.
 First we find a lower bound $\frac{\alpha_BL_{AI}L_{IB}}{\sigma^2+\alpha_EL_{AI}L_{IE}}$ of the objective function by ignoring the noise in numerator and this lower bound is approximately equal to the original objective function because the path loss much larger than noise by appropriately optimizing the location of IRS. Then, (\ref{max_location}) is transformed as
\begin{align}
\label{min_location}
\min \limits_{\boldsymbol{\omega}_I} \ \frac{\sigma^2+\alpha_EL_{AI}L_{IE}}{\alpha_BL_{AI}L_{IB}}, \ {\rm{s.t.}}\ \boldsymbol{\omega}_I \in \Omega_{I}.
\end{align}
To further solve the problem, we introduce auxiliary variables ${\bf a}=[a_{AI},a_{IB},a_{IE},a_{AB},a_{BE}]^T$,
where $ a_{AI}=\frac{1}{L_{AI}}$, $ a_{IE}=\frac{1}{L_{IE}}=$, $ a_{IB}=\frac{1}{L_{IB}}$, $a_{AB}= a_{AI}a_{IE}$, $a_{BE}= \frac{a_{IB}}{a_{IE}}$, and transform
(\ref{min_location}) as
\begin{subequations}\label{auxiliary}
 \begin{align}
  \min \limits_{\boldsymbol{\omega}_I,{\bf a}} &\ \frac{\sigma^2}{\alpha_B}a_{AB}+\frac{\alpha_E}{\alpha_B}a_{BE}, \label{auxiliarya}\\
 {\rm{s.t.}} &\
 a_{AI}\geq \frac{\|\boldsymbol{\omega}_{I}-\boldsymbol{\omega}_{A}\|^{\rho_{AI}}}{L_0}, \
 a_{IB}\geq \frac{\|\boldsymbol{\omega}_{I}-\boldsymbol{\omega}_{B}\|^{\rho}}{L_0},\label{auxiliaryb}\\
 & a_{IE} \leq \frac{\|\boldsymbol{\omega}_{I}-\boldsymbol{\omega}_{E}\|^{\rho}}{L_0}, \ a_{AB}\geq a_{AI}a_{IB},\ a_{BE}\geq \frac{a_{IB}}{a_{IE}}. \label{auxiliaryc}
\end{align}

Note that in problem (\ref{auxiliary}), the inequalities in constraints (\ref{auxiliaryb}) and (\ref{auxiliaryc}) will be active at the optimum. This could be proved by the contradiction. Assume that at the optimum one of the corresponding constraints in (\ref{auxiliary}) is a strict inequality. Then, we can always decrease $a_{AI}$ to satisfy the constraint with equality, which decreases the objective value. Therefore, for the optimal solution of (\ref{auxiliary}), $a_{AI}$ must be satisfied with equality. Similarly, all the other auxiliary variables must be active at the optimum. Now, the objective function and constraint (\ref{auxiliaryb}) are convex but constraints in (\ref{auxiliaryc}) are still non-convex. In the following, we propose an SCA method \cite{T-16} to solve (\ref{auxiliary}) iteratively by exploiting the first-order Taylor expansion of all the constraints.

First, by given a feasible point $\boldsymbol{\omega}_{I}^{(l)}$, the upper bound for $a_{IE} \leq \frac{\|\boldsymbol{\omega}_{I}-\boldsymbol{\omega}_{E}\|^{\rho}}{L_0}$ in (\ref{auxiliaryc}) is given by
 \begin{align}
\notag
 L_0a_{IE}&\leq \| \boldsymbol{\omega}_{I}^{(l)}-\boldsymbol{\omega}_{E}\|^{\rho}+\rho \left( \| \boldsymbol{\omega}_{I}^{(l)}-\boldsymbol{\omega}_{E}\|^2\right)^{\frac{\rho}{2}-1}\\
  \label{taylor}
 &\times\left( \boldsymbol{\omega}_{I}^{(l)}-\boldsymbol{\omega}_{E}\right)^T\left(\boldsymbol{\omega}_{I}-\boldsymbol{\omega}_{I}^{(l)} \right),
\end{align}
Next, we equivalently transform the constraints $ a_{AB}\geq a_{AI}a_{IB}$ and $\ a_{BE}\geq \frac{a_{IB}}{a_{IE}}$ as
 \begin{align}
 \notag
 &\ a_{AB}\geq a_{AI}a_{IB}=\frac{1}{2}\left[\left(a_{AI}+a_{IB} \right)^2-a_{AI}^2-a_{IB}^2\right],\\
 \notag
 &\ a_{BE}\geq \frac{a_{IB}}{a_{IE}}\Leftrightarrow\\
 & a_{IB}\leq a_{IE}a_{BE}=\frac{1}{2}\left[\left(a_{IE}+a_{BE} \right)^2-a_{IE}^2-a_{BE}^2\right].
\end{align}
Then, the convex upper and lower bounds at given  points $a_{AI}^{(l)},a_{IE}^{(l)},a_{IB}^{(l)}$ and $a_{BE}^{(l)}$ are given by (\ref{taylor1}) and (\ref{taylor2}) at the top of next page.

\end{subequations}
\begin{figure*}
\begin{align}
  \label{taylor1}
  a_{AI}a_{IB}\leq b_1 &\triangleq\frac{1}{2}\left[\left(a_{AI}+a_{IB} \right)^2-\left({a_{AI}^{(l)}}^2+2a_{AI}^{(l)}\left(a_{AI}-a_{AI}^{(l)} \right)\right)-\left({a_{IB}^{(l)}}^2+2a_{IB}^{(l)}\left(a_{IB}-a_{IB}^{(l)} \right)\right)\right],\\
 \label{taylor2}
  a_{IE}a_{BE}\geq b_2&\triangleq \frac{1}{2}\left(a_{IE}^{(l)}+a_{BE}^{(l)} \right)^2+\left(a_{IE}^{(l)}+a_{BE}^{(l)} \right)\left(a_{IB}-a_{IB}^{(l)} \right)
  +\left(a_{IE}^{(l)}+a_{BE}^{(l)} \right)\left(a_{BE}-a_{BE}^{(l)} \right)-\frac{1}{2}(a_{IE}^2+a_{BE}^2).
 \end{align}
{\noindent}	 \rule[-6pt]{18cm}{0.05em}\\
\end{figure*}

 Finally, the problem $(\ref{auxiliary})$ is transformed as
 \begin{align}
\notag
 \min \limits_{\boldsymbol{\omega}_I,{\bf a}} &\ \frac{\sigma^2}{\alpha_B}a_{AB}+\frac{\alpha_E}{\alpha_B}a_{BE}, \\
  \label{auxiliary_final}
  {\rm{s.t.}} &\ (\ref{auxiliaryb}),\ (\ref{taylor}), \ a_{AB}\geq b_1, \ a_{IB}\leq b_2,
\end{align}
where all the constraints are convex, so this problem can be conveniently solved. The overall algorithm to maximize $R$ by optimizing  $\boldsymbol{\omega}_I$ is summarized in Algorithm 1.
\begin{algorithm}[h]
	\caption{SCA method To Solve Problem (\ref{auxiliary_final})}
	\begin{algorithmic}
         \State 1. set the convergence precision $\epsilon$ and $\boldsymbol{\omega}_I^{(0)}$, $l=0$;
         \Repeat
         \State 2. solve problem \ref{auxiliary_final}) to obtain $\boldsymbol{\omega}_I^{(l)}$;
         \State 3. $l=l+1$;
         \Until{$\|\boldsymbol{\omega}_I^{(l+1)}-\boldsymbol{\omega}_I^{(l)}\|^2\leq \epsilon$;}
         \State 4. output $\boldsymbol{\omega}_I$.
	\end{algorithmic}
\end{algorithm}

\subsection{ Optimize Beamformer and Phase Shift}\label{BeamPhase}
After obtain the optimized location of IRS, the next step is to optimize the beamformer ${\bf{f}}$ and phase shifts $\boldsymbol{\phi}$. Note that once  a deployed location is fixed, the instantaneous CSI of Alice-IRS and IRS-Bob are available by channel training, and LoS components of IRS-Eve is also available, but the NLoS components of IRS-Eve link are unknown random variables. The CSI model of Eve now is expressed as (\ref{EveError}). The sub-problem to optimize $\left({\bf{f}},\boldsymbol{\phi}\right)$ is expressed as
\begin{subequations}\label{AO}
 \begin{align}
\max \limits_{{\bf{f}},\boldsymbol{\phi}} &\ R, \label{AOa} \\
 {\rm{s.t.}} \  &\ {\rm{Pr}}\left\{R_s({\bf{f}},\boldsymbol{\phi})\geq R\right\} \geq 1-p_{out}, \label{AOb}\\
&\ \|{\bf{f}}\|^2 \leq P, |\boldsymbol{\phi}_{i}|=1,\ i=1,2,\cdots,M,\label{AOc}
\end{align}
\end{subequations}
This problem is non-convex due to the outage probability constraint (\ref{AOb}) so we first handle this constraint. To make it tractable, we rewrite the inequality $R_s({\bf{f}},\boldsymbol{\phi})\geq R$ as
 \begin{align}
\notag
 &\log\left(1+\frac{|\boldsymbol{\phi}^H{\bf{G}}_{AB}{\bf{f}}|^2}{{\sigma}^2}\right) - \log\left(1+\frac{|\boldsymbol{\phi}^H
 {\bf{G}}_{AE}{\bf{f}}|^2}{{\sigma}^2}\right)\geq R,\\
 \label{Trans}
 &\Leftrightarrow {\rm Tr}\left({{\bf{G}}}_{AE}{\bf{F}}{{\bf{G}}}_{AE}^H{\bf{Q}}\right)\leq 2^{-R}\left({\sigma}^2+
{\rm Tr}\left({{\bf{G}}}_{AB}{\bf{F}}{{\bf{G}}}_{AB}^H{\bf{Q}}\right)\right)-{\sigma}^2,
 \end{align}
where ${\bf{F}}\triangleq {\bf{f}}{\bf{f}}^H$ and ${\bf{Q}}\triangleq \boldsymbol{\phi}\boldsymbol{\phi}^H$. Here, ${\bf G}_{AB}$ is known, ${\bf G}_{AE}=\overline{\bf G}_{AE} + \widetilde{\bf G}_{AE}$ is modeled in (\ref{EveError}) with determinate and random components $\overline{\bf G}_{AE}$ and $\widetilde{\bf G}_{AE}$, where each element of $\widetilde{{\bf{G}}}_{AE} $ follows $\mathcal{CN}\left({0},\frac{\kappa L_{AI}L_{IE}} {(\kappa +1)^2}\right)$ based on proposition 1. By substituting ${\bf G}_{AE}$ into (\ref{Trans}), we could further obtain
\begin{align}
\notag
 &{\rm Tr}\left(\left(\overline{\bf{G}}_{AE}+\widetilde{\bf G}_{AE}\right){\bf{F}}\left(\overline{\bf{G}}_{AE}^H
 +\widetilde{\bf G}_{AE}^H\right){\bf{Q}}\right)\\
 \notag
 &\leq 2^{-R}\left({\sigma}^2+
{\rm Tr}\left({{\bf{G}}}_{AB}{\bf{F}}{{\bf{G}}}_{AB}^H{\bf{Q}}\right)\right)-{\sigma}^2,\\
\label{trans}
 \Leftrightarrow
  &\underbrace{{\rm Tr}\left(\widetilde{\bf G}_{AE}{\bf{F}}\widetilde{\bf G}_{AE}^H{\bf{Q}}\right)}_{f_1}+
 2{\rm{Re}}\underbrace{\left\{{\rm Tr}\left(\overline{{{\bf{G}}}}_{AE}{\bf{F}}\widetilde{\bf G}_{AE}^H{\bf{Q}}\right)\right\}}_{f_2}\leq c_1,
 \end{align}
 where $c_1 \triangleq2^{-R}\left({\sigma}^2+
{\rm Tr}\left({{\bf{G}}}_{AB}{\bf{F}}{{\bf{G}}}_{AB}^H{\bf{Q}}\right)\right)-{\sigma}^2-{\rm Tr}\left( \overline{\bf{G}}_{AE}{\bf{F}}\overline{\bf{G}}_{AE}^H{\bf{Q}}\right)$ does not include the random variable $\widetilde{{\bf{G}}}_{AE}$, but both $f_1$ and $f_2$ include it.

To handle $f_1$ and $f_2$, we first denote ${\bf{g}}_{AE}\triangleq{\rm vec}\left( \widetilde{{\bf{G}}}_{AE}\right) \in \mathbb{C}^{MN_t\times 1} \sim \mathcal{CN}\left({\bf{0}},\delta_{AE}^2{\bf{I}}_{MN_t}\right)$ as ${\bf{g}}_{AE}= \delta_{AE}
{\bf{u}}$, where $\delta_{AE}=\sqrt{\frac{\kappa L_{AI}L_{IE}} {(\kappa +1)^2}}$, and ${\bf{u}} \in \mathbb{C}^{MN_t\times 1}\sim \mathcal{CN}\left({\bf0},{{\bf{I}}}_{MN_t}\right)$. Then $f_1$ in $(\ref{trans})$ can be reformulated as
\begin{align}
 \label{f1}
f_1
\overset{(c)}{=} {\bf{g}}_{AE}^H\left( {\bf{F}}^T\otimes {\bf{Q}}\right){\bf{g}}_{AE}
 =\delta_{AE}^2 {\bf{u}}^H\left( {\bf{F}}^T\otimes {\bf{Q}}\right){\bf{u}}
 \triangleq {\bf{u}}^H{\bf{A}}_{AE}{\bf{u}},
\end{align}
where $(c)$ is obtained by invoking the identity ${\rm Tr}\left({\bf{A}}^H\bf{BCD}\right)={\rm vec}^H\left({\bf{A}}\right)\left( \bf{D}^T\otimes {\bf{B}}\right)\rm vec(\bf{C})$ and ${\bf{A}}_{AE}=\delta_{AE}^2 \left( {\bf{F}}^T\otimes {\bf{Q}}\right)$. Similarly, the expression $f_2$ in $(\ref{trans})$ can be reformulated as
\begin{align}
\label{f2}
f_2  =& \delta_{AE}{\bf{u}}^H\left( {\bf{F}}^T\otimes {\bf{Q}}\right){\rm vec}\left(\overline{\bf{G}}_{AE}\right)
 \triangleq {\bf{u}}^H{\bf{a}}_{AE},
\end{align}
where ${\bf{a}}_{AE}=\delta_{AE}\left( {\bf{F}}^T\otimes {\bf{Q}}\right){\rm vec}\left(\overline{\bf{G}}_{AE}\right)$. By substituting $(\ref{f1})$ and$(\ref{f2})$ into $(\ref{trans})$, we have
\begin{align}
(\ref{trans}) \Leftrightarrow  {\bf{u}}^H{\bf{A}}_{AE}{\bf{u}}+2{\rm{Re}}\left\{{\bf{u}}^H{\bf{a}}_{AE}\right\}-c_1\leq 0,
\end{align}
and the outage constraint $(\ref{AOb})$ can be reformulated as
\begin{align}
\label{pout_final}
{\rm{Pr}}\left\{{\bf{u}}^H{\bf{A}}_{AE}{\bf{u}}+2{\rm{Re}}\left\{{\bf{u}}^H{\bf{a}}_{AE}\right\}-c_1\geq 0\right\} \leq p_{out}.
\end{align}
With the probability constraint in form of $(\ref{pout_final})$, we could exploit Bernstein-Type Inequality-I (BTI-I) in Lemma 1 to handle it which has quadratic forms of Gaussian variables matrix.
\begin{lemma}
(Bernstein-Type Inequality-I \cite{Ma-14}): Let ${\bf{G}} = {\bf{x}}^H{\bf{Cx}}+2{\rm Re}\left\{{\bf{x}}^H{\bf{c}}\right\}$, where ${\bf{C}}\in \mathbb{C}^{N\times N}$
is a complex Hermitian matrix, ${\bf{c}} \in \mathbb{C}^{N\times 1}$, and ${\bf{x}}\sim \mathcal{CN}(\bf{0},{{\bf{I}}})$. Then for any $\varrho \geq 0$, we have
\begin{align}
\notag
 {\rm{Pr}} \left\{{\bf G}\geq \rm Tr({\bf C}) +\sqrt{2\varrho}\sqrt{||\rm vec({\bf C})||^2+2||{\bf c}||^2}+\varrho \lambda^+({\bf C})\right\} \leq e^{-\varrho},
\end{align}
where $\lambda^+({\bf C})=max\left\{ \lambda_{max}({\bf C}),0\right\}$, and $\lambda_{max}({\bf C})$ represents the maximum eigenvalue of ${\bf C}$.
\end{lemma}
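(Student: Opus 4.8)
The plan is to prove the tail bound by the classical Chernoff (moment generating function) method, after reducing the quadratic-plus-linear form $\mathbf{G}$ to a sum of independent scalar contributions. First I would diagonalize the Hermitian matrix as $\mathbf{C}=\mathbf{U}\boldsymbol{\Lambda}\mathbf{U}^{H}$ with $\boldsymbol{\Lambda}=\mathrm{diag}(\lambda_{1},\dots,\lambda_{N})$ real and $\mathbf{U}$ unitary. Because $\mathbf{x}\sim\mathcal{CN}(\mathbf{0},\mathbf{I})$ is rotationally invariant, the change of variables $\mathbf{y}=\mathbf{U}^{H}\mathbf{x}$ again yields $\mathbf{y}\sim\mathcal{CN}(\mathbf{0},\mathbf{I})$, while $\tilde{\mathbf{c}}=\mathbf{U}^{H}\mathbf{c}$ preserves $\|\tilde{\mathbf{c}}\|=\|\mathbf{c}\|$. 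Under this rotation $\mathbf{G}=\sum_{i=1}^{N}\bigl(\lambda_{i}|y_{i}|^{2}+2\mathrm{Re}\{\bar{y}_{i}\tilde{c}_{i}\}\bigr)$ becomes a sum of $N$ independent terms, so its MGF factorizes over $i$.

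Next I would evaluate the MGF of a single term. Writing $y_{i}=a+jb$ with $a,b$ i.i.d. $\mathcal{N}(0,\tfrac12)$ and completing the square in the resulting Gaussian integral gives, for $t\lambda_{i}<1$,
\[
\mathbb{E}\bigl[e^{t(\lambda_{i}|y_{i}|^{2}+2\mathrm{Re}\{\bar{y}_{i}\tilde{c}_{i}\})}\bigr]=\frac{1}{1-t\lambda_{i}}\exp\!\Bigl(\frac{t^{2}|\tilde{c}_{i}|^{2}}{1-t\lambda_{i}}\Bigr).
\]
Multiplying over $i$ and subtracting the mean $\mathbb{E}\{\mathbf{G}\}=\mathrm{Tr}(\mathbf{C})=\sum_{i}\lambda_{i}$, the centered cumulant generating function is
\[
\log\mathbb{E}\bigl[e^{t(\mathbf{G}-\mathrm{Tr}(\mathbf{C}))}\bigr]=\sum_{i=1}^{N}\Bigl[-\log(1-t\lambda_{i})-t\lambda_{i}+\frac{t^{2}|\tilde{c}_{i}|^{2}}{1-t\lambda_{i}}\Bigr],
\]
valid on $0<t<1/\lambda^{+}(\mathbf{C})$.

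The third step is to dominate this by a sub-gamma envelope. For the positive eigenvalues I would use $-\log(1-u)-u\le \tfrac{u^{2}/2}{1-u}$ ($0\le u<1$), and for the negative ones the separate bound $-\log(1-u)-u\le \tfrac{u^{2}}{2}$ ($u\le 0$); combined with the monotonicity $0<1-t\lambda^{+}(\mathbf{C})\le 1-t\lambda_{i}$, each summand is at most $\tfrac{t^{2}(\lambda_{i}^{2}/2+|\tilde{c}_{i}|^{2})}{1-t\lambda^{+}(\mathbf{C})}$. Since $\sum_{i}\lambda_{i}^{2}=\|\mathrm{vec}(\mathbf{C})\|^{2}$ and $\sum_{i}|\tilde{c}_{i}|^{2}=\|\mathbf{c}\|^{2}$, the whole CGF is bounded by $\tfrac{t^{2}v/2}{1-t\lambda^{+}(\mathbf{C})}$ with variance proxy $v=\|\mathrm{vec}(\mathbf{C})\|^{2}+2\|\mathbf{c}\|^{2}$, exhibiting $\mathbf{G}-\mathrm{Tr}(\mathbf{C})$ as a sub-gamma variable with variance factor $v$ and scale $\lambda^{+}(\mathbf{C})$.

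Finally I would invoke the standard Chernoff optimization for sub-gamma variables: minimizing $-t\tau+\tfrac{t^{2}v/2}{1-t\lambda^{+}(\mathbf{C})}$ over $t\in(0,1/\lambda^{+}(\mathbf{C}))$ and inverting the resulting rate function shows $\mathrm{Pr}\{\mathbf{G}-\mathrm{Tr}(\mathbf{C})\ge \sqrt{2v\varrho}+\lambda^{+}(\mathbf{C})\varrho\}\le e^{-\varrho}$, which is exactly the claimed bound once $v$ is substituted. I expect the third step to be the main obstacle: one must pick the two elementary logarithmic inequalities so that the quadratic and linear parts merge into a single proxy $v$ with the correct constants (the $\tfrac12$ on $\|\mathrm{vec}(\mathbf{C})\|^{2}$ and the factor $2$ on $\|\mathbf{c}\|^{2}$) while uniformly absorbing negative eigenvalues through the common denominator $1-t\lambda^{+}(\mathbf{C})$; the subsequent Chernoff optimization is routine once the sub-gamma form is in place.
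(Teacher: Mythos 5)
Your proof is correct, but there is no proof in the paper to compare it with: the authors import Lemma 1 by citation from \cite{Ma-14} (the result originates in Bechar's Bernstein-type inequality for centered Gaussian quadratic forms), and none of the paper's appendices addresses it. Judged on its own merits, your argument is essentially the standard derivation behind the cited result, and every step checks out. The unitary reduction and the single-term MGF $\frac{1}{1-t\lambda_i}\exp\left(\frac{t^2|\tilde{c}_i|^2}{1-t\lambda_i}\right)$ are exact, and the step you flagged as the main obstacle goes through precisely as you set it up: $-\log(1-u)-u=\sum_{k\geq 2}u^{k}/k\leq\frac{u^{2}/2}{1-u}$ on $[0,1)$, $-\log(1-u)-u\leq u^{2}/2$ for $u\leq 0$, and in both cases the denominator relaxes to $1-t\lambda^{+}(\mathbf{C})$ (via $1-t\lambda_i\geq 1-t\lambda^{+}(\mathbf{C})$ when $\lambda_i>0$, and $1-t\lambda_i\geq 1\geq 1-t\lambda^{+}(\mathbf{C})$ when $\lambda_i\leq 0$, the latter also absorbing the linear terms), so that $\sum_i\lambda_i^{2}=\|\mathrm{vec}(\mathbf{C})\|^{2}$ delivers exactly the proxy $v=\|\mathrm{vec}(\mathbf{C})\|^{2}+2\|\mathbf{c}\|^{2}$.

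Two small additions would make the write-up self-contained. First, the final inversion you defer to the sub-gamma literature admits a one-line verification: taking $t^{*}=\frac{\sqrt{2\varrho/v}}{1+\lambda^{+}(\mathbf{C})\sqrt{2\varrho/v}}$, which lies in the admissible range, makes the Chernoff exponent equal $-\varrho$ exactly at the threshold $\sqrt{2v\varrho}+\lambda^{+}(\mathbf{C})\varrho$. Second, the degenerate case $\lambda^{+}(\mathbf{C})=0$ (all eigenvalues nonpositive) deserves a sentence: there the MGF constraint is vacuous, the envelope becomes sub-Gaussian, $\log\mathbb{E}\left[e^{t(\mathbf{G}-\mathrm{Tr}(\mathbf{C}))}\right]\leq t^{2}v/2$ for all $t>0$, and the same conclusion follows with the $\varrho\lambda^{+}(\mathbf{C})$ term absent, which is what the lemma asserts in that case.
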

With BTI-I, $(\ref{pout_final})$ is transformed to a deterministic form as
\begin{align}
\notag
&{\rm Tr}({\bf{A}}_{AE}) +\sqrt{2\varrho}\sqrt{\|{\rm vec}({\bf{A}}_{AE})\|^2++2\|{\bf{a}}_{AE}\|^2}\\
\label{BTI}
&+\varrho \lambda^+({\bf{A}}_{AE})-c_1\leq 0,
\end{align}
where $\varrho = -{\rm ln}(p_{out})$. If $(\ref{BTI})$ is true, $(\ref{pout_final})$ must hold true. Consequently, $(\ref{AO})$ is transformed as
\begin{subequations}\label{ NextPerfectLocation}
 \begin{align}
 \max \limits_{{\bf{F}} ,{\bf{Q}}} &\ R, \label{ NextPerfectLocationa} \\
 \notag
 \ {\rm{s.t.}} & \ (\ref{BTI}),\ {\rm Tr}({\bf{F}})\leq P,\ {\rm rank}({\bf{F}})=1,{\bf F} \succeq {\bf 0}, \\
 \ &\ {\rm Diag}\left({\bf{Q}}\right)= {\bf{1}}_M,\ {\rm rank}({\bf{Q}})=1,\ {\bf{Q}} \succeq {\bf 0} \label{ NextPerfectLocationc},
\end{align}
\end{subequations}
where ${\rm Diag}\left({\bf{Q}}\right)= {\bf{1}}_M$ ensures the phase shifts $\boldsymbol{\phi}$ with unit modulus.
 Although ${\bf{F}}$ and ${\bf{Q}}$ are coupled in (\ref{BTI}), fortunately, this constraint is convex for ${\bf{F}}$ given ${\bf{Q}}$, and is convex for ${\bf{Q}}$ given ${\bf{F}}$ as well, respectively. All the other constraints by dropping rank-one constraint in $(\ref{ NextPerfectLocation})$ are also convex. Hence, an AO algorithm is developed to optimize ${\bf{F}}$ and ${\bf{Q}}$ iteratively.

\subsubsection{Optimize Beamformer}\label{Beam}
Firstly, we optimize ${\bf{F}}$ by fixed ${\bf{Q}}$, which is expressed as
\begin{align}
\label{beam}
\max \limits_{{\bf{F}}\succeq {\bf 0}} \ R, \ {\rm{s.t.}} \ (\ref{BTI}), \ {\rm Tr}({\bf{F}})\leq P, \ {\rm rank}({\bf{F}})=1.
\end{align}
In fact, to maximize $R$ in (\ref{beam}) is equivalently to first solve a power minimization (PM) problem and then take a bisection search over $R$ to obtain the optimal $R^*$ because the optimal value of PM problem is monotonically increasing with respect to $R$ \cite{Ma-14}. Thus, solving PM problem with different $R$ and using a bisection search over $R$, $R^{*}$ can be obtained.

For given a target SR $R>0$, PM problem is expressed as
\begin{align}
  \label{beam2} \min \limits_{{\bf{F}}} \ {\rm Tr}({\bf F}), \ {\rm{s.t.}} \ (\ref{BTI}), \ {\rm rank}({\bf{F}})=1,\  {\bf{F}}\succeq {\bf 0}.
\end{align}
(\ref{beam2}) is equivalent to the following relaxed problem by dropping the constraint ${\rm{rank}}({\bf{F}})=1$.
\begin{subequations}\label{beam4}
\begin{align}
 \min \limits_{{\bf{F}}\succeq {\bf 0},\zeta,\upsilon} &\ {\rm Tr}\left({\bf{F}}\right) \label{BTIa}\\
 \notag
{\rm{s.t.}} &\ {\rm Tr}\left({\bf{A}}_{AE}\right) +\sqrt{2\varrho}\zeta+\varrho \upsilon+{\sigma}^2+ {\rm Tr}\left( \overline{\bf{G}}_{AE}{\bf{F}}\overline{\bf{G}}_{AE}^H{\bf{Q}}\right)\\
\notag
&-2^{-R}\left({\sigma}^2+
{\rm Tr}\left({{\bf{G}}}_{AB}{\bf{F}}{{\bf{G}}}_{AB}^H{\bf{Q}}\right)\right)\leq 0,\\
&\ \begin{Vmatrix}
 {\rm vec}\left({\bf{A}}_{AE}\right)\\
2{\bf{a}}_{AE}
\end{Vmatrix}\leq \zeta, \  \upsilon{{\bf{I}}}-{\bf{A}}_{AE}\succeq {\bf{0}},\ \upsilon\geq 0, \label{BTId}
\end{align}
\end{subequations}
where $\upsilon$ and $\zeta$ are the slack variables. Since ${\bf{A}}_{AE}$ and ${\bf{a}}_{AE}$ are liner for ${\bf F}$, so
 ${\rm Tr}\left({\bf{A}}_{AE}\right)$, ${\rm Tr}\left({{\bf{G}}}_{AB}{\bf{F}}{{\bf{G}}}_{AB}^H{\bf{Q}}\right)$ and ${\rm Tr}\left( \overline{\bf{G}}_{AE}{\bf{F}}\overline{\bf{G}}_{AE}^H{\bf{Q}}\right)$ are liner constraints, $\begin{Vmatrix}
 {\rm vec}\left({\bf{A}}_{AE}\right)\\
2{\bf{a}}_{AE}
\end{Vmatrix}$ is a second cone (SOC) constraint and $\upsilon{{\bf{I}}}-{\bf{A}}_{AE}\succeq {\bf{0}}$, ${\bf{F}}\succeq {\bf{0}}$ are linear matrix inequality (LMI) constraints. Therefore, it is a convex problem which can be solved \cite{Boyd-04}. Considering the rank-1 constraint, in the following proposition we prove that we can always obtain a rank-one optimal ${\bf{F}}$ if the problem is feasible.
Thus, the optimal ${\bf{f}}$ can be obtained by eigen-decomposition of ${\bf{F}}$.

\begin{prop}
A rank-one solution in $(\ref{beam4})$ can always be obtained if $(\ref{beam4})$ is feasible.
\end{prop}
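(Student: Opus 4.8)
The plan is to establish rank-one optimality through a Karush–Kuhn–Tucker (KKT) analysis of the relaxed convex program (\ref{beam4}). First I would observe that, once the constraint $\mathrm{rank}(\mathbf{F})=1$ is dropped, (\ref{beam4}) is a convex SDP built from one linear inequality, the second-order-cone (SOC) constraint in (\ref{BTId}), and two linear-matrix-inequality (LMI) constraints; assuming Slater's condition holds, strong duality applies and any optimal $\mathbf{F}^{\star}$ must satisfy the KKT system. The whole argument then hinges on the stationarity condition with respect to $\mathbf{F}$.

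Next I would attach a multiplier $\lambda\ge 0$ to the linear inequality, an SOC multiplier $(\mathbf{y},s)$ to $\|[\,\mathrm{vec}(\mathbf{A}_{AE});\,2\mathbf{a}_{AE}\,]\|\le\zeta$, a PSD matrix $\mathbf{Z}\succeq\mathbf{0}$ to $\upsilon\mathbf{I}-\mathbf{A}_{AE}\succeq\mathbf{0}$, and a PSD matrix $\mathbf{Y}\succeq\mathbf{0}$ to $\mathbf{F}\succeq\mathbf{0}$. Using $\mathbf{A}_{AE}=\delta_{AE}^{2}(\mathbf{F}^{\rm T}\otimes\mathbf{Q})$ together with the identities $\mathrm{Tr}(\mathbf{A}_{AE})=\delta_{AE}^{2}M\,\mathrm{Tr}(\mathbf{F})$ and $\mathrm{Tr}(\mathbf{G}_{AJ}\mathbf{F}\mathbf{G}_{AJ}^{H}\mathbf{Q})=\mathrm{Tr}(\mathbf{F}\,\mathbf{G}_{AJ}^{H}\mathbf{Q}\mathbf{G}_{AJ})$, the stationarity $\partial\mathcal{L}/\partial\mathbf{F}=\mathbf{0}$ would be rearranged into
\begin{align}
\notag
\mathbf{Y}=\underbrace{\mathbf{I}+\lambda\delta_{AE}^{2}M\mathbf{I}+\mathbf{P}_{Z}+\boldsymbol{\Psi}_{\mathrm{SOC}}+\lambda\,\overline{\mathbf{G}}_{AE}^{H}\mathbf{Q}\overline{\mathbf{G}}_{AE}}_{\boldsymbol{\Theta}}-\lambda 2^{-R}\,\mathbf{G}_{AB}^{H}\mathbf{Q}\mathbf{G}_{AB},
\end{align}
where $\mathbf{P}_{Z}\succeq\mathbf{0}$ is the partial-trace contribution of the LMI dual $\mathbf{Z}$ weighted by $\mathbf{Q}$ and $\boldsymbol{\Psi}_{\mathrm{SOC}}$ collects the SOC-dual terms. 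The decisive structural fact is that, since $\mathbf{Q}=\boldsymbol{\phi}\boldsymbol{\phi}^{H}$ is rank-one at this stage, the useful-signal matrix $\mathbf{G}_{AB}^{H}\mathbf{Q}\mathbf{G}_{AB}=(\mathbf{G}_{AB}^{H}\boldsymbol{\phi})(\mathbf{G}_{AB}^{H}\boldsymbol{\phi})^{H}$ is rank-one; the remaining matrices inside $\boldsymbol{\Theta}$ are positive semidefinite, and the $\mathbf{I}$ contributed by the objective makes $\boldsymbol{\Theta}\succ\mathbf{0}$ once $\boldsymbol{\Psi}_{\mathrm{SOC}}$ is shown not to destroy definiteness.

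The conclusion is then routine. Complementary slackness $\mathbf{Y}\mathbf{F}^{\star}=\mathbf{0}$ gives $\boldsymbol{\Theta}\mathbf{F}^{\star}=\lambda 2^{-R}\mathbf{G}_{AB}^{H}\mathbf{Q}\mathbf{G}_{AB}\mathbf{F}^{\star}$, and invertibility of $\boldsymbol{\Theta}$ yields $\mathbf{F}^{\star}=\lambda 2^{-R}\boldsymbol{\Theta}^{-1}(\mathbf{G}_{AB}^{H}\mathbf{Q}\mathbf{G}_{AB})\mathbf{F}^{\star}$, whose rank cannot exceed $\mathrm{rank}(\mathbf{G}_{AB}^{H}\mathbf{Q}\mathbf{G}_{AB})=1$; hence $\mathrm{rank}(\mathbf{F}^{\star})\le 1$. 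Since a positive target rate forces $\mathbf{F}^{\star}\ne\mathbf{0}$ (otherwise $\lambda=0$ and $\boldsymbol{\Theta}\succ\mathbf{0}$ would force $\mathbf{F}^{\star}=\mathbf{0}$), we obtain $\mathrm{rank}(\mathbf{F}^{\star})=1$ exactly, and $\mathbf{f}^{\star}$ is recovered by eigen-decomposition.

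The step I expect to be the main obstacle is verifying $\boldsymbol{\Theta}\succ\mathbf{0}$, specifically controlling $\boldsymbol{\Psi}_{\mathrm{SOC}}$. Whereas $\mathbf{P}_{Z}$ is positive semidefinite (a partial trace of the PSD matrix $\mathbf{Z}$ against the PSD matrix $\mathbf{Q}$) and the Eve term is PSD, the SOC multiplier enters linearly through $\mathbf{y}$ and need not contribute a sign-definite matrix, so one must argue that the dominant $(1+\lambda\delta_{AE}^{2}M)\mathbf{I}$ term keeps all of $\boldsymbol{\Theta}$ positive definite. Once this is secured, the single rank-one subtraction leaves the null space of $\mathbf{Y}$ at most one-dimensional and the claim follows.
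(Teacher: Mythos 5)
Your KKT route is genuinely different from the paper's proof, and it has a real gap at exactly the step you flag: positive definiteness of $\boldsymbol{\Theta}$ is not established, and the domination argument you sketch cannot close it. Stationarity with respect to $\zeta$ ties the SOC dual to the BTI multiplier, $s=\lambda\sqrt{2\varrho}$, so $\|\mathbf{y}\|\le\lambda\sqrt{2\varrho}$ with $\varrho=-\ln p_{out}$. The SOC contribution $\boldsymbol{\Psi}_{\mathrm{SOC}}$ is linear in $\mathbf{y}$, sign-indefinite, and its operator norm can reach order $\lambda\sqrt{2\varrho}\,\delta_{AE}^{2}M$ (the $\mathrm{vec}(\mathbf{A}_{AE})$ block alone has sensitivity $\delta_{AE}^{2}\|\mathbf{Q}\|_{F}=\delta_{AE}^{2}M$). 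The only full-rank PSD term that also scales with $\lambda$ is $\lambda\delta_{AE}^{2}M\,\mathbf{I}$: the Eve term $\lambda\,\overline{\mathbf{G}}_{AE}^{H}\mathbf{Q}\overline{\mathbf{G}}_{AE}$ is rank one (since $\mathbf{Q}=\boldsymbol{\phi}\boldsymbol{\phi}^{H}$) and $\mathbf{P}_{Z}$ may vanish, while the bare $\mathbf{I}$ from the objective is irrelevant as $\lambda$ grows. Domination in every direction would therefore require $\sqrt{2\varrho}<1$, i.e.\ $p_{out}>e^{-1/2}\approx 0.61$, which fails in any practical regime (the paper uses $p_{out}=0.05$, $\varrho\approx 3$). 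Complementary slackness does not rescue the argument either: when $\lambda>0$ the cone constraint is active, $\mathbf{y}$ anti-aligns with $\mathbf{c}(\mathbf{F}^{\star})$, and $\boldsymbol{\Psi}_{\mathrm{SOC}}$ then contains terms proportional to symmetrized products $\mathbf{B}\mathbf{F}^{\star}+\mathbf{F}^{\star}\mathbf{B}$ with $\mathbf{B}=\overline{\mathbf{G}}_{AE}^{H}\mathbf{Q}\overline{\mathbf{G}}_{AE}\succeq\mathbf{0}$, and such products are not PSD in general. So the decomposition ``$\mathbf{Y}=$ positive definite minus rank one,'' on which your whole conclusion rests, is unproven and, as far as I can see, not true in the stated generality.

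The paper avoids duality altogether and uses a constructive projection argument, which is the repair I would recommend. With $\mathbf{h}_{b}=(\mathbf{h}_{IB}^{H}\boldsymbol{\Phi}\mathbf{H}_{AI})^{H}$ and $\mathbf{h}_{e}=(\mathbf{h}_{IE}^{H}\boldsymbol{\Phi}\mathbf{H}_{AI})^{H}$, take any optimal $\mathbf{F}^{\star}$ and set $\widetilde{\mathbf{F}}^{\star}=\mathbf{F}^{\star\frac{1}{2}}\mathbf{P}\mathbf{F}^{\star\frac{1}{2}}$, where $\mathbf{P}$ is the orthogonal projection onto $\mathbf{F}^{\star\frac{1}{2}}\mathbf{h}_{b}$. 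This matrix is rank one by construction, satisfies $\mathrm{Tr}(\widetilde{\mathbf{F}}^{\star})\le\mathrm{Tr}(\mathbf{F}^{\star})$, leaves Bob's quadratic form unchanged, $\mathbf{h}_{b}^{H}\widetilde{\mathbf{F}}^{\star}\mathbf{h}_{b}=\mathbf{h}_{b}^{H}\mathbf{F}^{\star}\mathbf{h}_{b}$, and never increases Eve's, since $\mathbf{F}^{\star}-\widetilde{\mathbf{F}}^{\star}=\mathbf{F}^{\star\frac{1}{2}}(\mathbf{I}-\mathbf{P})\mathbf{F}^{\star\frac{1}{2}}\succeq\mathbf{0}$ gives $\mathbf{h}_{e}^{H}\widetilde{\mathbf{F}}^{\star}\mathbf{h}_{e}\le\mathbf{h}_{e}^{H}\mathbf{F}^{\star}\mathbf{h}_{e}$ for every realization of $\mathbf{h}_{e}$; hence the secrecy-outage constraint is preserved and $\widetilde{\mathbf{F}}^{\star}$ is an optimal rank-one solution. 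This bypasses the SOC/LMI dual structure of (\ref{beam4}) entirely, which is precisely where your argument gets stuck.
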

\begin{proof}
Please refer to Appendix F.
\end{proof}
The overall algorithm to maximize $R$ by optimizing ${\bf F}$ is summarized in Algorithm 1.
\begin{algorithm}[h]
	\caption{Bisection Method To Solve Problem $(\ref{beam})$}
	\begin{algorithmic}
         \State 1. set $\epsilon=10^{-3}$, the upper $\overline{R}_u$ and lower bound $\overline{R}_l$;
         \State 2. let $\overline{R}_{mid}=\frac{\overline{R}_u+\overline{R}_l}{2}$ and solve (\ref{beam4}) with $R=\overline{R}_{mid}$;
         \State 3. if (\ref{beam4}) is feasible, check the power constraint ${\rm Tr}({\bf F})\leq P$, if power constraint satisfies, set $\overline{R}_l = \overline{R}_{mid}$ and go to step 5; otherwise, set $\overline{R}_u = \overline{R}_{mid}$ and go to step 2;
         \State 4. if (\ref{beam4}) is infeasible, let $\overline{R}_u = \overline{R}_{mid}$ and go to step 2;
         \State 5. if $\overline{R}_u-\overline{R}_l\leq \epsilon$, stop and $R^*=\overline{R}_l$; else go to step 2.
	\end{algorithmic}
\end{algorithm}

\subsubsection{Optimize Phase Shift}\label{Phase}
The sub-problem of optimizing ${\bf Q}$ under fixed ${\bf{F}}$  is expressed as
\begin{align}
\label{shift}
\max \limits_{ {\bf{Q}}\succeq {\bf 0}} \ R,\
 {\rm{s.t.}} \ (\ref{BTI}), \ {\rm Diag}\left({\bf{Q}}\right)= {\bf{1}}_M,\ {\rm rank}({\bf{Q}})=1.
\end{align}
The optimal value of (\ref{shift}) can be obtained by first solving a feasibility check problem and then using bisection  search.
For given a target $R>0$, the feasibility check problem is
\begin{align}
\label{shift2}
\mathop{\rm{Find}}\limits_{{\bf Q}\succeq {\bf 0}}\ {\bf Q}, \ {\rm{s.t.}}\ (\ref{BTI}), \ {\rm Diag}\left({\bf{Q}}\right)= {\bf{1}}_M,\ {\rm rank}({\bf{Q}})=1.
\end{align}
Obviously, the feasible area is monotonically decreasing with respect to $R$. Thus, an iterative bisection search over $R$ can obtain $R^{*}$. To solve (\ref{shift2}), we first reformulate it as (\ref{shift3}) by dropping the rank one constraint.
\begin{subequations}\label{shift3}
\begin{align}
 \mathop{\rm{Find}}\limits_{{\bf Q}\succeq {\bf 0}} &\  {\bf{Q}} \label{finda}\\
 \notag
{\rm{s.t.}} \ &\  {\rm Tr}\left({\bf{A}}_{AE}\right) +\sqrt{2\varrho}\alpha+\varrho \beta+{\sigma}^2+ {\rm Tr}\left( \overline{\bf{G}}_{AE}{\bf{F}}\overline{\bf{G}}_{AE}^H{\bf{Q}}\right)\\
\notag
&-2^{-R}\left({\sigma}^2+
{\rm Tr}\left({{\bf{G}}}_{AB}{\bf{F}}{{\bf{G}}}_{AB}^H{\bf{Q}}\right)\right)\leq 0, \ {\rm Diag}\left({\bf{Q}}\right)= {\bf{1}}_M,\\
 & \begin{Vmatrix}
 {\rm vec}\left({\bf{A}}_{AE}\right)\\
 2{\bf{a}}_{AE}
\end{Vmatrix}\leq \alpha,  \  \beta{{\bf{I}}}-{\bf{A}}_{AE}\succeq {\bf{0}}, \ \beta\geq 0,  \label{findd}
\end{align}
\end{subequations}
If dropping the rank-one constraint of ${\rm{rank}}({\bf{Q}})=1$, problem $(\ref{shift3})$ is convex with the same analysis of $(\ref{beam4})$. Therefore, we use SDR method to find a solution and then use sequential rank-one constraint relaxation algorithm to recover the rank-one constraint. In such a way, $\boldsymbol{\phi}$ is a KKT stationary solution
of (\ref{shift}) \cite{Cao-17}.
The algorithm is summarized in Algorithm 2.
\begin{algorithm}[h]
	\caption{Bisection Method To Solve Problem $(\ref{shift})$}
	\begin{algorithmic}
         \State 1. set $\epsilon=10^{-3}$, the upper $\overline{R}_u$ and lower bound $\overline{R}_l$ of $R$;
         \Repeat
         \State 2. let $\overline{R}_{mid}=\frac{\overline{R}_u+\overline{R}_l}{2}$ and solve $(\ref{shift3})$ with $R=\overline{R}_{mid}$;
         \State 3. if infeasible, let $\overline{R}_u = \overline{R}_{mid}$; otherwise, let $\overline{R}_l = \overline{R}_{mid}$;
         \Until{$\overline{R}_u-\overline{R}_l\leq \epsilon$};
         \State 4. output the optimal the $R^*$ and recover rank-one constraint to obtain the local optimal $\boldsymbol{\phi}^*$;
	\end{algorithmic}
\end{algorithm}
\subsection{Algorithm Analysis}
 The overall two-stage algorithm to solve the joint optimization problem (\ref{ PerfectLocation}) of IRS location, beamformer and reflection coefficient  is summarized in Algorithm 3.
\begin{algorithm}[h]
	\caption{Two-Stage Method To Solve Problem (\ref{ PerfectLocation})}
	\begin{algorithmic}
         \State 1. set the convergence precision $\epsilon=10^{-3}$, initial $\boldsymbol{\phi}^{(0)}$, $n=0$;
         \State 2. solve problem (\ref{auxiliary_final}) to obtain the sub-optimal $\boldsymbol{\omega}_I$;
         \Repeat
         \State 3. solve $(\ref{beam})$ with Algorithm 1 to obtain ${\bf f}^{(n+1)}$ ;
         \State 4. solve $(\ref{shift})$ with Algorithm 2 to obtain ${\boldsymbol{\phi}}^{(n+1)}$ and $R^{(n+1)}$;
         \State 5. $n=n+1$;
         \Until{$R^{(n+1)}-R^{(n)}\leq \epsilon$;}
	\end{algorithmic}
\end{algorithm}

 Based on the algorithm description, we provide a brief analysis on the convergence and complexity. For optimizing $\boldsymbol{\omega}_I$ in $(\ref{auxiliary})$, this problem only involve vector multiplication so the computational
 complexity is $\mathcal{O}\left\{ n\right\}$. For given IRS location, by iteratively solving  $(\ref{beam})$ with optimal and solving $(\ref{shift})$ with local optimal, the target SR $R$ can be monotonically increased with guaranteed convergence. Note that for given $\boldsymbol{\omega}_I$, $R\left({{\bf{f}}^{(n)}},\boldsymbol{\phi}^{(n)}\right)$ is a feasible solution for $(\ref{ NextPerfectLocation})$ in the $n$-th iteration. For the next iteration, we solve $(\ref{beam})$ optimally by fixed $\boldsymbol{\phi}^{(n)}$
so that $R\left({{\bf{f}}^{(n)}},\boldsymbol{\phi}^{(n)}\right)\leq R\left({{\bf{f}}^{(n+1)}},\boldsymbol{\phi}^{(n)}\right)$. Then, a local optimal solution
$\boldsymbol{\phi}^{(n+1)}$ can be obtained by solve $(\ref{shift})$ under fixed ${{\bf{f}}^{(n+1)}}$, so that
$R\left({{\bf{f}}^{(n+1)}},\boldsymbol{\phi}^{(n)}\right)\leq R\left({{\bf{f}}^{(n+1)}},\boldsymbol{\phi}^{(n+1)}\right)$. Therefore, the two iterative steps
follow $R\left({{\bf{f}}^{(n)}},\boldsymbol{\phi}^{(n)}\right)\leq R\left({{\bf{f}}^{(n+1)}},\boldsymbol{\phi}^{(n+1)}\right)$ thus guarantee the convergence.

 Since both the resulting convex problem $(\ref{beam4})$ and $(\ref{shift3})$ involve two LMI, one SOC and there linear constraints that can be solved
 by a standard interior point method, the general expression for computational complexity has been given in \cite{Zhou-20}. The computational
 complexities of proposed method for (\ref{beam4}) and (\ref{shift3}) in per iteration are listed in Table I.
\begin{table}[h]
		\centering
      \caption{computational complexities}
		\label{tab:Margin_settings}
		\begin{tabular}{|l|c|}\hline
            \multirow{2}{*}{(\ref{beam4})}&$\mathcal{O}\left\{\left(N_tM+N_t+2\right)^{1 / 2} N_t^2\left[N_t^{4}+N_t^2\left(N_t^2M^2+N_t^2\right)\right. \right.$\\
               &$\left. \left. +N_t^3M^3+N_t^3+N_t^2\left(N_t^2M^2+N_tM\right)^2\right]\right\}$ \\ \hline
               \multirow{2}{*}{(\ref{shift3})}&$\mathcal{O}\left\{\left(N_tM+M+2\right)^{1 / 2} M^2\left[M^4+M^2\left(N_t^2M^2+M^2\right)\right. \right.$ \\
               &$\left. \left. +N_t^3M^3+M^3+M^2 \left(N_t^2M^2+N_tM\right)^2\right]\right\}$ \\ \hline
		\end{tabular}
\end{table}

\section{ Problem Solution With Location Region Of Eve}
In this section, we aim to maximize SR without exact location of Eve but only a suspicious area where an Eve may exist. In this case, both large-scale and small-scale fadings of IRS-Eve link are unknown, which prevents the optimization of IRS location. To guarantee the security, we consider the worst case that the information leakage to Eve is maximum, which means we will maximize the minimum secrecy rate. The problem is expressed as
\begin{subequations}\label{worst_Eve}
\begin{align}
\max \limits_{{\bf{f}},\boldsymbol{\phi},\boldsymbol{\omega}_I} \min \limits_{\boldsymbol{\omega}_E}&\ R \label{worst_Evea}\\
 {\rm{s.t.}} &\ {\rm{Pr}}\left\{R_s(\boldsymbol{\omega}_I,\boldsymbol{\omega}_E,{\bf{f}},\boldsymbol{\phi})\leq R\right\} \leq p_{out},\label{worst_Eveb}\\
 \notag
&\ \|{\bf{f}}\|^2 \leq P, |\boldsymbol{\phi}_{i}|=1,\ i=1,\cdots,M,\\
& \boldsymbol{\omega}_I\in \Omega_I, \ \boldsymbol{\omega}_E\in \Omega_E.\label{worst_Evec}
\end{align}
\end{subequations}

Since  $\boldsymbol{\omega}_I$ is relevant to $\boldsymbol{\omega}_E$, if $\boldsymbol{\omega}_E$ is taken as an optimization variable, $\boldsymbol{\omega}_I$ is also an optimization variable, so the instantaneous CSI NLoS components ${\bf{H}}_{AI}^{NLoS}, {\bf{h}}_{IB}^{NLoS}, {\bf{h}}_{IE}^{NLoS}$ are unavailable for Alice, which should be taken as random variables. In this case the CSI model is expressed as (3). However, once $\boldsymbol{\omega}_I$ and $\boldsymbol{\omega}_E$ are fixed and
for optimizing ${\bf f}$ and $\boldsymbol{\phi}$, only ${\bf{h}}_{IE}^{NLoS}$ are unavailable and the CSI model now is expressed as (4). Therefore, the problem is similar to the previous problem (\ref{ PerfectLocation}) and the AO method could not be applied directly to solve this problem.
 Hence, we still utilize the two-stage method to solve problem (\ref{worst_Eve}), where it jointly optimizes $\boldsymbol{\omega}_I$ and $\boldsymbol{\omega}_E$ to obtain $\boldsymbol{\omega}_I$ and worst $\boldsymbol{\omega}_E$ in the first stage rather than optimize them in two independently subproblems, and then to optimize ${\bf f}$ and $\boldsymbol{\phi}$ to maximize $R$ by fixed IRS and Eve at these locations. Therefore, the objective function could be rewritten as:
\begin{align}
\max \limits_{{\bf{f}},\boldsymbol{\phi}} \left( \max_{\boldsymbol{\omega}_I} \min \limits_{\boldsymbol{\omega}_E}  R \right),
\end{align}
where in the first stage the sub-optimization problem in the bracket is solved, where both the worst location of Eve and IRS deployment location are obtained, and in the second stage the out bracket subproblem is solved, where the optimal beamformer and IRS phase shifts are obtained.

With the same analysis in Sec. \ref{Two-Stage}, in the first stage, we still transform the outage constraints (\ref{worst_Eveb}) with its upper bound based on proposition 3 and 4 to make it only related to $\left(\boldsymbol{\omega}_I,\boldsymbol{\omega}_E\right)$ but not to ${\bf{f}}$ and $\boldsymbol{\phi}$. We denote $\gamma_E \triangleq\frac{\kappa^2 M^2 N_tP}{\left(\kappa+1\right)^2}+|\boldsymbol{\phi}^H\widetilde{\bf{G}}_{AE}^{small}{\bf{f}}|^2+\frac{2\kappa M \sqrt{N_tP}}{\kappa+1}|\boldsymbol{\phi}^H\widetilde{\bf{G}}_{AE}^{small}{\bf{f}}|$ and $\gamma_B\triangleq \frac{\kappa^2 M^2 N_tP}{\left(\kappa+1\right)^2}+|\boldsymbol{\phi}^H\widetilde{\bf{G}}_{AB}^{small}{\bf{f}}|^2-\frac{2\kappa M \sqrt{N_tP}}{\kappa+1}|\boldsymbol{\phi}^H\widetilde{\bf{G}}_{AB}^{small}{\bf{f}}|$ in proposition 3 and 4, and denote cumulative distribution functions of $\gamma_E$ and $\gamma_B$ as as $F_E(x)$ and $F_B(x)$, which are determinate. Then, by denoting $\alpha_E \triangleq F^{-1}_E(1-p_{out})$ and $\alpha_B \triangleq F^{-1}_B(p_{out})$, where $F^{-1}$ is the inverse cumulative distribution function, (\ref{worst_Eveb}) is transformed as $\alpha_E\leq \frac{\left(2^{R_E}-1\right)\sigma^2}{L_{AI}L_{IE}}$ and $\alpha_B\geq \frac{\left(2^{R_B}-1\right)\sigma^2}{L_{AI}L_{IB}}$, so $\left(\boldsymbol{\omega}_I,\boldsymbol{\omega}_E\right)$ only impact $L_{AI}, L_{IB}, L_{IE}$. Then, based on proposition 5, the problem for optimizing $\left(\boldsymbol{\omega}_I,\boldsymbol{\omega}_E\right)$ can be transformed as
\begin{align}
\label{NL_worst}
 \max \limits_{\boldsymbol{\omega}_I} \min \limits_{\boldsymbol{\omega}_E} \ \frac{\sigma^2+\alpha_BL_{AI}L_{IB}}{\sigma^2+\alpha_EL_{AI}L_{IE}}, \ {\rm{s.t.}} \ \boldsymbol{\omega}_I\in {\Omega}_I,\ \boldsymbol{\omega}_E\in \Omega_E.
\end{align}
where $L_{AI}, L_{IB}$ are functions of $\boldsymbol{\omega}_I$, and $L_{IE}$ is a function of both $\boldsymbol{\omega}_I$ and $\boldsymbol{\omega}_E$.
Since $L_{AI}, L_{IB}$ are only related to $\boldsymbol{\omega}_I$, but $L_{IE}$ in numerator is related to both $\boldsymbol{\omega}_I$ and $\boldsymbol{\omega}_E$, $\boldsymbol{\omega}_I$ is interacted with $\boldsymbol{\omega}_E$, so this Max-Min problem is hard to be solved directly. Since this problem is a maximum problem for $\boldsymbol{\omega}_I$, but a minimum problem for $\boldsymbol{\omega}_E$,  the monotonicity cannot be guaranteed by exploited the AO algorithm. To solve it, we transform the original Max-Min problem into several parallel Min-problems by given different $\boldsymbol{\omega}_I$ and then global search for the maximum value over $\boldsymbol{\omega}_I$.

The Min-problem for optimized $\boldsymbol{\omega}_E$ by given any $\boldsymbol{\omega}_I$ is expressed as
\begin{align}
\label{Eve_location}
 \min \limits_{\boldsymbol{\omega}_E} \ \frac{\sigma^2+\alpha_BL_{AI}L_{IB}}{\sigma^2+\alpha_EL_{AI}L_{IE}}, \ {\rm{s.t.}} \ \boldsymbol{\omega}_E\in \Omega_E.
\end{align}
where $\boldsymbol{\omega}_E$ only impacts $L_{IE}$ but does not impact $L_{AI}, L_{IB}$. Hence, we only optimize $\boldsymbol{\omega}_E$ to maximize $L_{IE}$ and this problem is convex which can be solved directly.

 After obtaining $\boldsymbol{\omega}_I$ and $\boldsymbol{\omega}_E$, in the second stage, we optimize $\left({\bf{f}},\boldsymbol{\phi}\right)$ to maximize SR subject to all the constraints with the same method as in Sec. \ref{BeamPhase}.

\section{Simulation Results}
To validate the performance of proposed two stage algorithms, extensive simulation results have been carried out in this section. The system parameters are listed in Table II. Note that all the simulation results illustrated except Fig.2 are averaged over 100 randomly generated channels.
\begin{table}[h]
		\centering
      \caption{simulation parameter}
		\label{tab:Margin_settings}
		\begin{tabular}{|l|c|}\hline
			Carrier center frequency & 2.4GHz\\ \hline
			Path loss exponents & $\rho_{AI}=2.2$, $\rho_{IE}=\rho_{IB}=3$\\ \hline
			Noise power at Bob and Eve & $\sigma_b^2 =\sigma_e^2= -95$ dBm\\ \hline
			Outage probability  & $p_{out}=0.05$ \\ \hline
            Number of transmit antenna & $N_t=4$  \\ \hline
             \multirow{2}{*}{ Location of Bob and Eve} & $\boldsymbol{\omega}_B=(100,15)$, \\
               &  $\boldsymbol{\omega}_E=(95,13)$ \\ \hline
            \multirow{2}{*}{ Location area of IRS} & $x_I\in \Delta_{x_I}=\left[0, 105\right]$, \\
               &  $ y_I \in \Delta_{y_I}=\left[20,30\right]$ \\ \hline
		\end{tabular}
\end{table}

We demonstrate the advantage of the proposed algorithm by comparing its performance with the following three benchmark schemes: 1) Random Location: $\boldsymbol{\omega}_I$ is randomly selected and ${\bf{f}}$ and $\boldsymbol{\phi}$ are optimized by solving $(\ref{beam})$ and $(\ref{shift})$; 2) Global Search Method: $\boldsymbol{\omega}_I$ is optimized by global search solving problem (\ref{max_location}); 3) MRT Method: $\boldsymbol{\omega}_I$ is optimized by proposed method, but ${\bf{f}}$ and $\boldsymbol{\phi}$ are only designed as ${\bf{f}}= \sqrt{\frac{1}{N_t}}{\boldsymbol{\alpha}}_A(\varphi_{AI})$ and $\boldsymbol{\phi}={\rm diag}\left( \boldsymbol{\alpha}_I(\varphi_{IB})\right)\boldsymbol{\alpha}_{I}\left(\theta_{AI}\right)$; 4) Gaussian Random method: We use global search method to optimize the location of IRS, then use the SDR+Guassian Random method to solve the two subproblems for optimizing ${\bf f}$ and $\boldsymbol{\phi}$.

\begin{figure}
\centering
\begin{minipage}[t]{0.48\textwidth}
\centering
\includegraphics[width=3.2in]{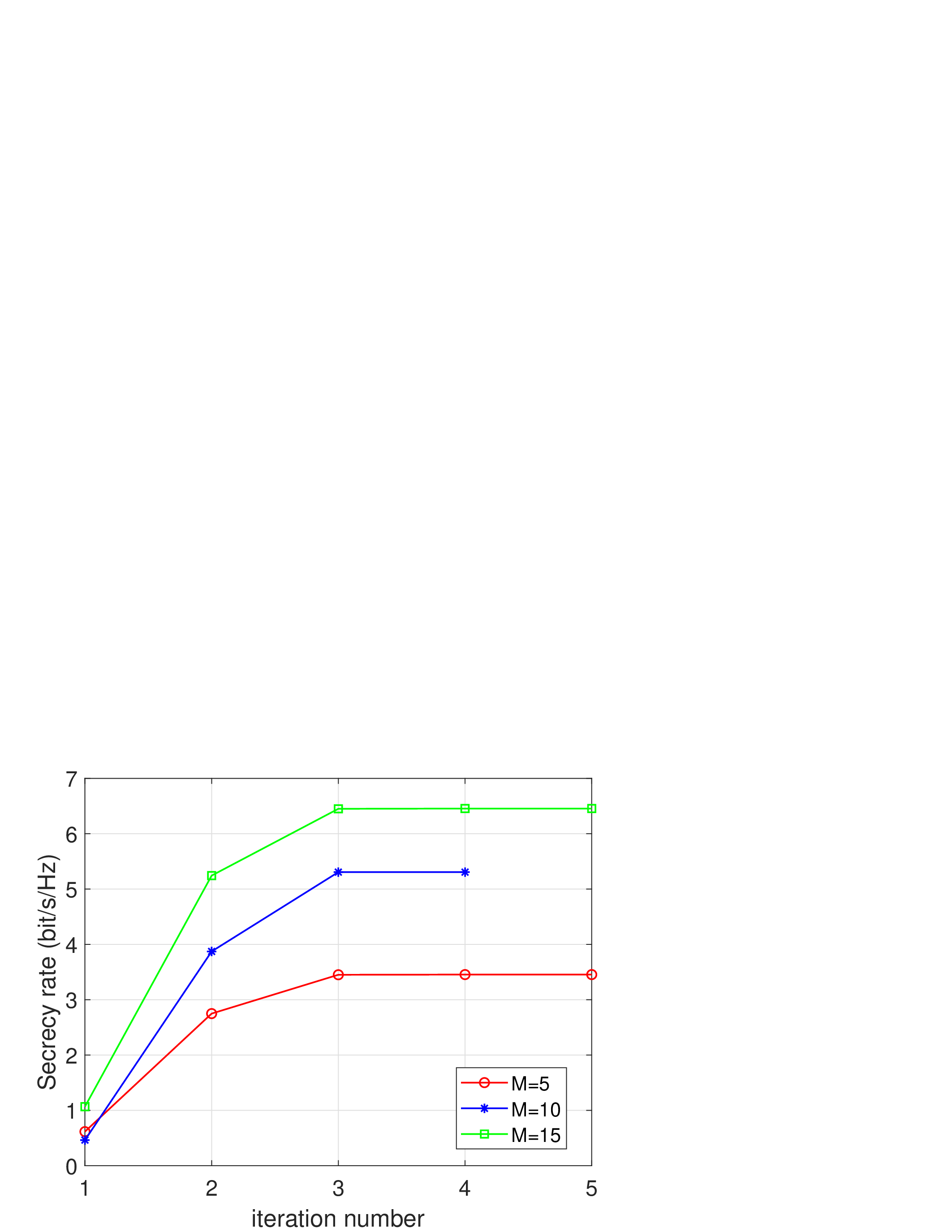}
\caption{Convergence of the proposed two stage scheme with different values of $M$. We set $P=30$dBm, $\kappa=2$.}
\end{minipage}
\begin{minipage}[t]{0.48\textwidth}
\centering
\includegraphics[width=3.2in]{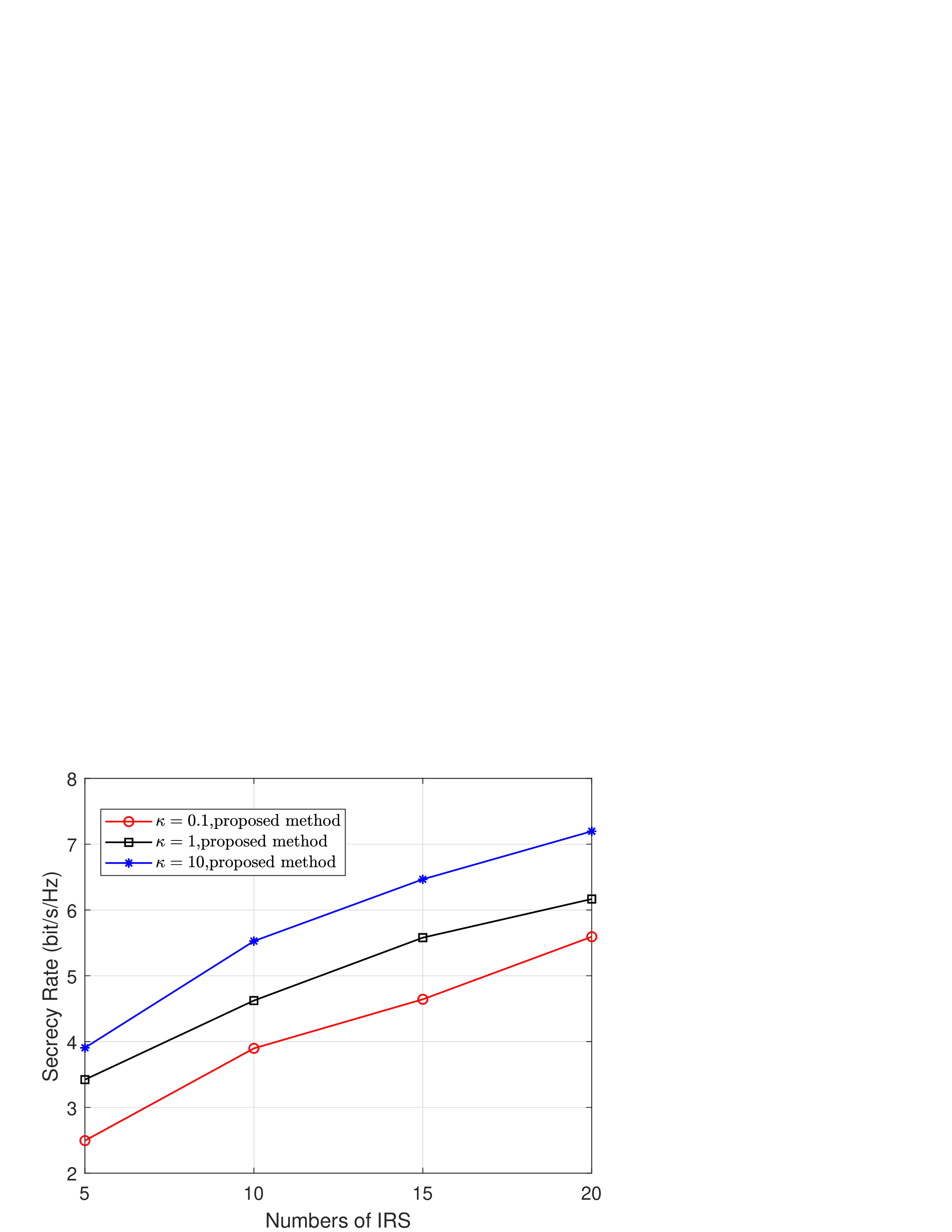}
\caption{Secrecy rate versus number of IRS with different values of $\kappa$. We set $P=30$dBm.}
\end{minipage}
\end{figure}

\begin{figure*}
  \centering
  \subfigure[ With the location of Eve $\boldsymbol{\omega}_{E}=(95,13)$. ]{\includegraphics[width=3.2in]{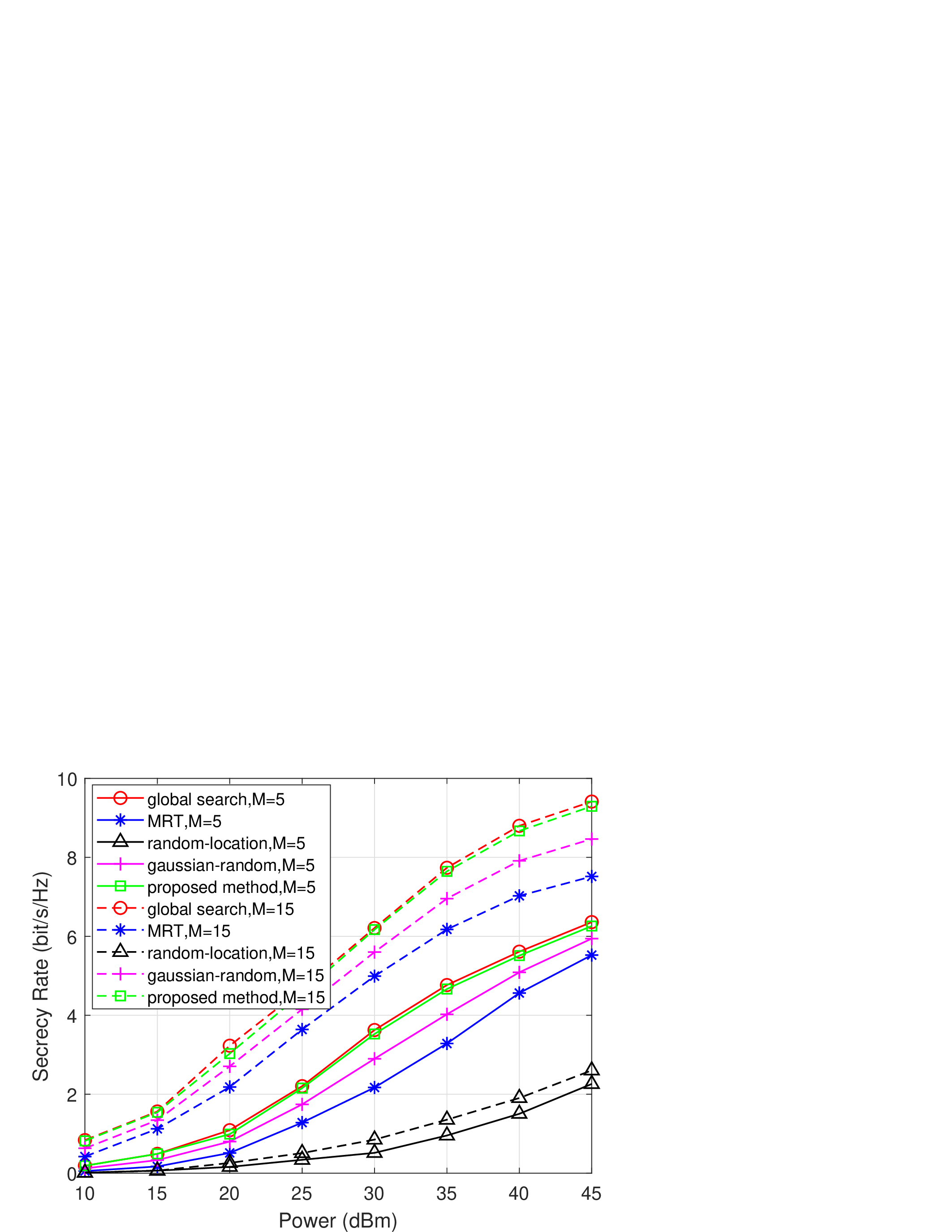}}
  \subfigure[ With the suspicious area of Eve $x_E \in \left(50, 98\right), y_E\in\left(5,13\right)$.]{\includegraphics[width=3.2in]{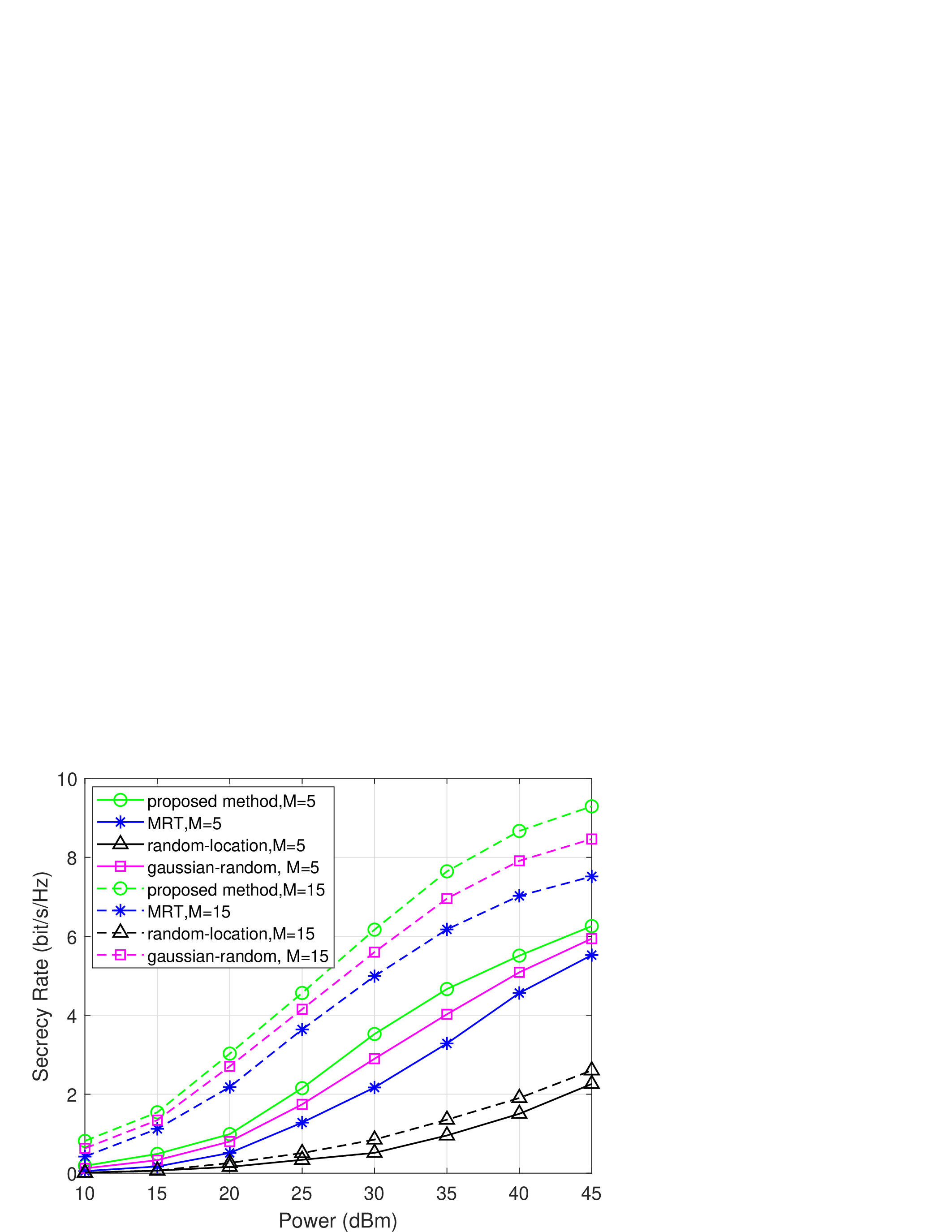}}
  \caption{Secrecy rate versus the transmit power with different values of $M$. We set $\kappa=2$.}
\end{figure*}

\begin{figure*}
  \centering
  \subfigure[ $\boldsymbol{\omega}_I$ versus $\boldsymbol{\omega}_E$ via fixed $\boldsymbol{\omega}_B=(100,15)$ m.]{\includegraphics[width=3.2in]{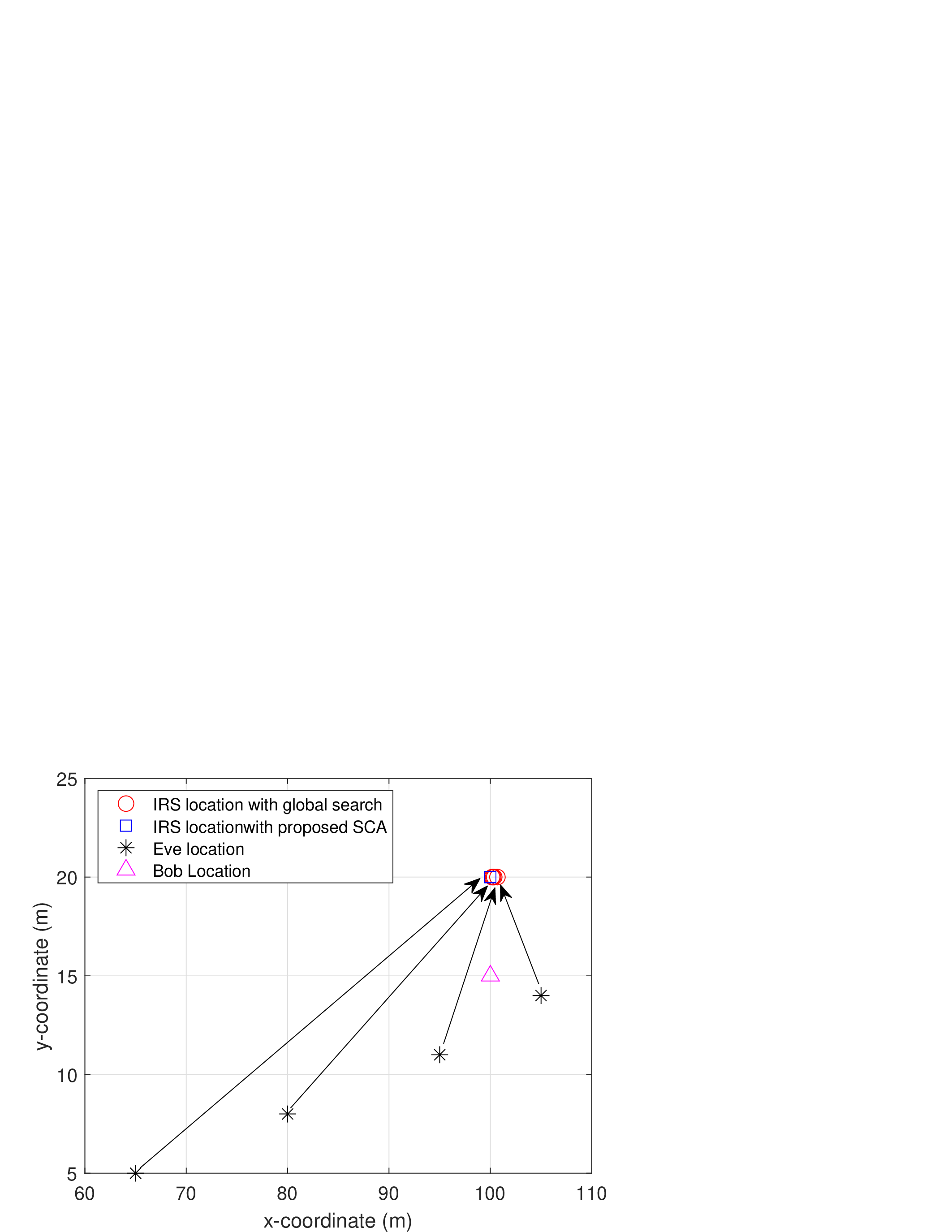}}
  \subfigure[ $\boldsymbol{\omega}_I$ versus $\boldsymbol{\omega}_B$ via fixed $\boldsymbol{\omega}_{E}=(95,13)$ m.]{\includegraphics[width=3.2in]{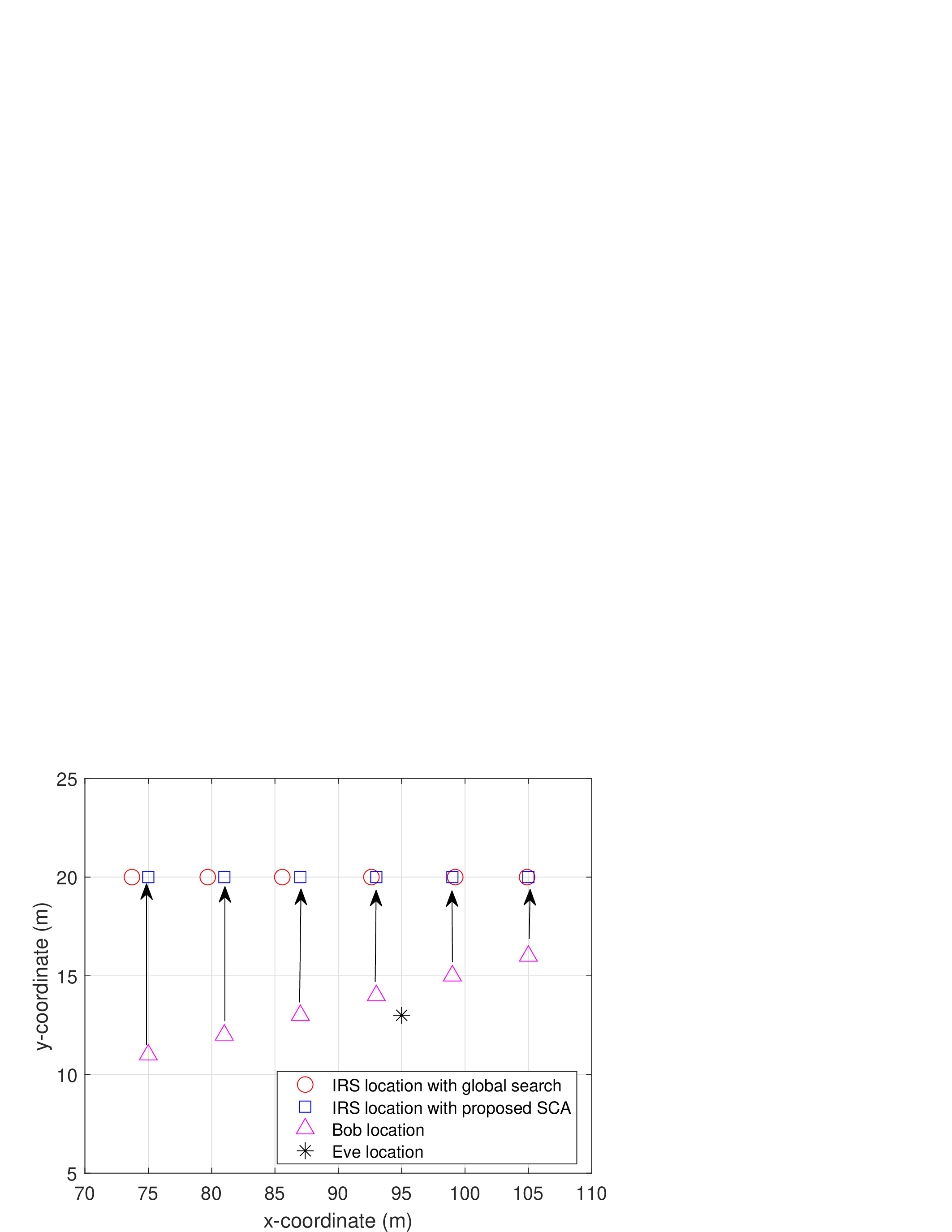}}
  \caption{The location of IRS versus location of Eve and Bob with two schemes. We set $P=30$dBm, $\kappa=2$.}
\end{figure*}

\begin{figure*}
  \centering
  \subfigure[ The location of IRS and Eve with suspicious areas of Eve.]{\includegraphics[width=3.2in]{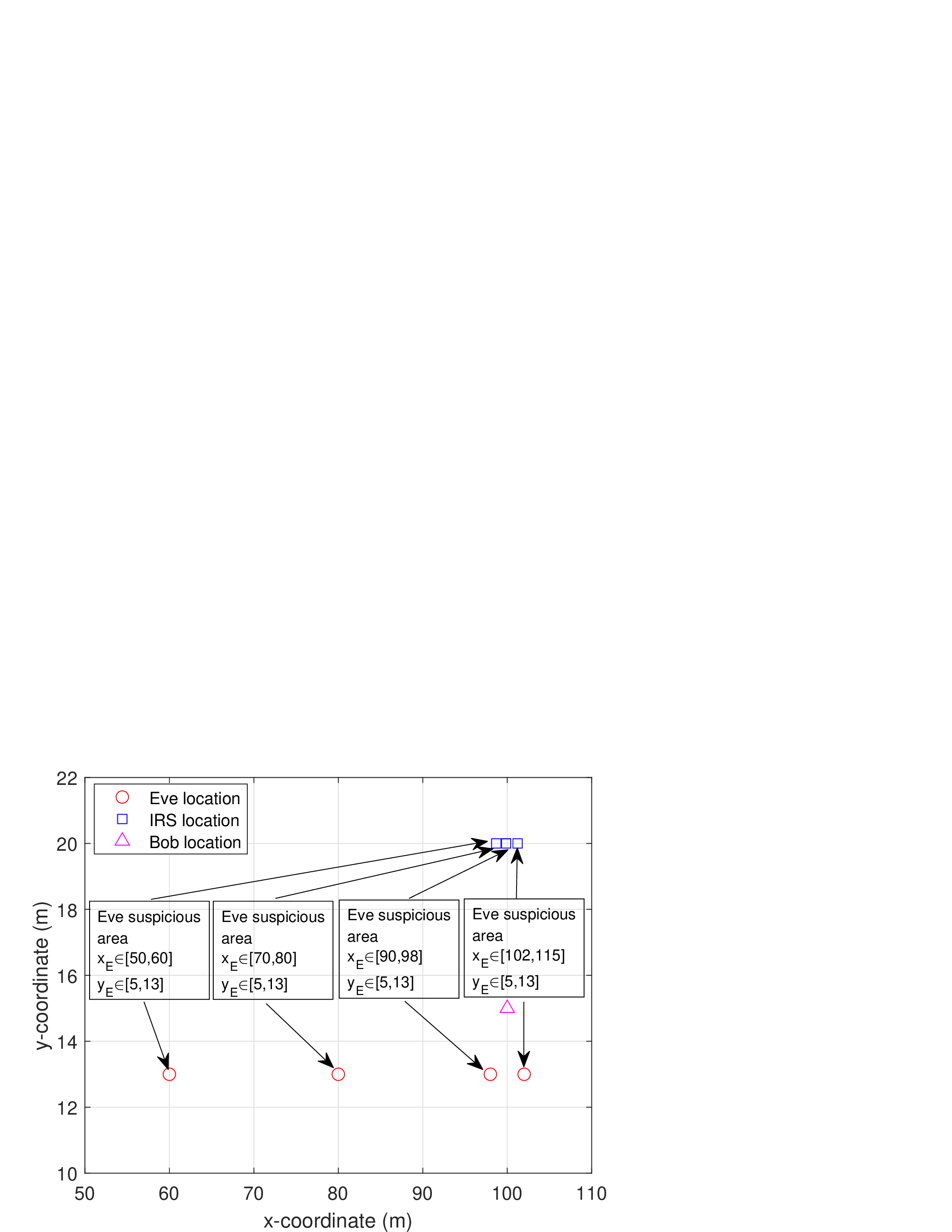}}
  \subfigure[ The secrecy rate with different suspicious areas of Eve.]{\includegraphics[width=3.2in]{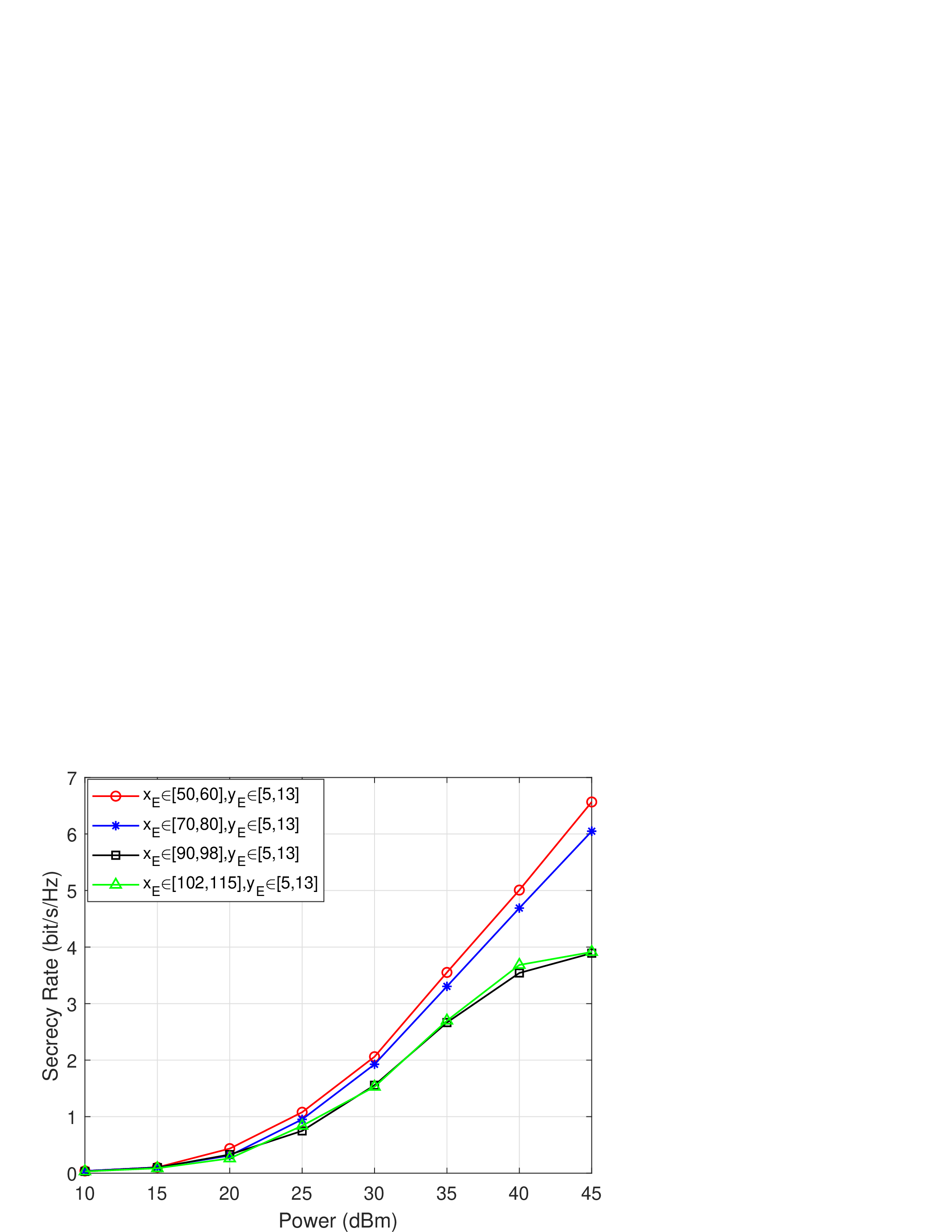}}
  \caption{The location of IRS and Eve and secrecy rate with different suspicious areas of Eve via fixed $\boldsymbol{\omega}_B=(100,15)$ m.}
\end{figure*}

The convergence of the proposed two stage method with different numbers of IRS is investigated in Fig.2. It is seen that the proposed algorithm converges for different element of IRS. With increasing $M$, the dimensions of optimization variables $\boldsymbol{\phi}$ increase, resulting in the computation time increasing.

Fig.3 shows the SR versus number of IRS with different values of $\kappa$. With increased IRS elements, more additional reflecting power can be applied to
transmit signal, thus increasing SR. In addition, we find that SR decreases with decreased $\kappa$. This is because the randomness of NLoS component of Eve link increases with decreased $\kappa$, thus increasing information leakage to Eve.

In Fig.4, we show the maximum SR versus the transmit power with different number of IRS and with different schemes in two cases. As observed from Fig.4(a) and Fig.4(b), we find that the maximum SR increases monotonically with the transmit power and elements of IRS. This is because larger SR is required more transmit power and more additional reflecting power can be applied to transmit signal with increased IRS elements. Compared to the random location scheme, our scheme can significantly enhance SR. This is due to the fact that if the location of IRS is randomly chosen, the path loss of both Eve and Bob may be large and LoS components of Eve and Bob may be more similar, so it is hard to guarantee security. Therefore, an optimized IRS location not only reduce the path loss, but enhance the superiority of legitimate channels. In addition, from Fig.4(a), we find that the proposed scheme with low complexity and the global search method with high complexity have the same performance.

Fig.5 depicts the location of IRS $\boldsymbol{\omega}_I$ versus the location of Eve $\boldsymbol{\omega}_E$ and Bob $\boldsymbol{\omega}_B$. In Fig.5(a) and Fig.5(b), we show the $\boldsymbol{\omega}_I$ optimized by our proposed method  and use global search method as benchmarks under four different $\boldsymbol{\omega}_E$ by fixed $\boldsymbol{\omega}_B=(100,15)$ and under six different $\boldsymbol{\omega}_B$ by fixed $\boldsymbol{\omega}_E=(95,13)$. From these two figures, we find that the location of IRS optimized by proposed  method is very close to that optimized by global search method. Moreover, the figures show that IRS should be deployed as close to Bob as possible wherever Eve is located. This is because that when IRS is deployed closely to Bob, the large-scale path loss of Alice-IRS-Bob link is small and the quality of legitimate channel is enhanced significantly.

Fig.6 shows the locations of IRS and Eve and secrecy rate versus suspicious areas of Eve. Fig.6(a) shows the $\boldsymbol{\omega}_I$ and $\boldsymbol{\omega}_E$ optimized by SCA and global search method under four different suspicious areas of Eve by fixed $\boldsymbol{\omega}_B=(100,15)$ m. We find that the worst $\boldsymbol{\omega}_E$ is the location closest to IRS in each the suspicious area of Eve. This is due to the fact that when Eve is closest to IRS, the path loss of IRS-Eve is minimum and information leakage to Eve is maximum. In Fig.6(b), by fixed $M=5$, we find that when suspicious area of Eve is closer to Bob, the SR is smaller. This is because that when  the suspicious area of Eve is closer to Bob, the worst $\boldsymbol{\omega}_E$ is closer to Bob and IRS, thus increasing the information leakage to Eve and decreasing the SR.

\section{Conclusion}
In this paper, we investigated the robust secrecy transmission in the IRS-aided multiple antennas wireless communications. For the first time the location optimization was considered in this work and we aim to maximize the secrecy rate by optimizing the location of IRS, transmit beamformer at Alice and phase shifts at IRS under two
different cases with the location of Eve or not.  We show the joint optimization problem could be  solved thought a two-stage optimization framework. In the first stage,  IRS location could be optimized via exploiting successive convex approximation method. In the second stage, an AO algorithm is proposed to optimize beamformer and phase shifts iteratively.
Similar idea has also been developed to solve the case where only a suspicious area of Eve is known.  Simulation results have verified the effectively of the proposed algorithm and shown the importance of IRS location optimization for enhancing secrecy performance.

\section{Appendix}
\subsection{ Proof of Proposition 1}

 First, for ${\rm diag}\left(\overline{ \bf{h}}_{IJ} \right)\widetilde{\bf{H}}_{AI}$ in (\ref{AllError}), each element of $\widetilde{\bf{H}}_{AI}$ follows $\mathcal{CN}\left({ 0},\frac{L_{AI}}{\kappa+1}\right)$ and each element of $\overline{\bf{h}}_{IJ}$ is determinate whose modulus is $\frac{\kappa L_{IJ}}{\kappa+1}$, so each element of ${\rm diag}\left(\overline{ \bf{h}}_{IJ} \right)\widetilde{\bf{H}}_{AI}$ follows $ \mathcal{CN}\left({ 0},\frac{\kappa L_{IJ}L_{AI}}{\left(\kappa+1\right)^2}\right)$. Then, similarly to ${\rm diag}\left(\widetilde{\bf{h}}_{IJ}\right)\overline{\bf{H}}_{AI}$, $\widetilde{\bf{h}}_{IJ}\sim \mathcal{CN}\left({\bf 0},\frac{L_{IJ}}{\kappa+1}{\bf I}_M\right)$ and each element of $\overline{\bf{H}}_{AI}$ is determinate with constant modulus $\frac{\kappa L_{AI}}{\kappa+1}$, so each element of ${\rm diag}\left(\widetilde{\bf{h}}_{IJ}\right)\overline{\bf{H}}_{AI}$ approximately follows $ \mathcal{CN}\left({0},\frac{\kappa L_{IJ}L_{AI}}{\left(\kappa+1\right)^2}{}\right)$. Hence, each element of $\widetilde{\bf{G}}_{AJ}$ follows $\mathcal{CN}\left({0},\frac{2\kappa L_{AI}L_{IJ}}{(\kappa+1)^2}\right)$.
\subsection{ Proof of Proposition 2}
First, according to (\ref{rate}), we re-write $\boldsymbol{\phi}^H \widetilde{\bf{G}}_{AJ} {\bf{f}}=\widetilde{\bf{h}}_{IJ}^H \boldsymbol{\Phi}\overline{\bf{H}}_{AI}{\bf{f}} + \overline{\bf{h}}_{IJ}^H \boldsymbol{\Phi}\widetilde{\bf{H}}_{AI}{\bf{f}}$. Based on proposition 1, we have $\widetilde{\bf{h}}_{IJ}\sim \mathcal{CN}\left({\bf 0},\frac{L_{IJ}}{\kappa+1}{\bf I}_{M}\right)$ and the each element of $\widetilde{\bf{H}}_{AI}$ follows $\mathcal{CN}\left({ 0},\frac{L_{AI}}{\kappa+1}\right)$. For any unitary matrices $\boldsymbol{\Phi}$, we have $
\widetilde{\bf h}_{IJ}^H\boldsymbol{\Phi} \sim \mathcal{CN}\left({\bf 0},\frac{L_{IJ}}{\kappa+1}{{\bf I}_{M}}\right)$ and each element of $\ \boldsymbol{\Phi}\widetilde{\bf H}_{AI}$ also follows $ \mathcal{CN}\left({0},\frac{L_{AI}}{\kappa+1}{}\right),$
  so that $\widetilde{\bf h}_{IJ}^H\boldsymbol{\Phi}$ and $\boldsymbol{\Phi}\widetilde{\bf H}_{AI}$ have the same distributions with $\widetilde{\bf h}_{IJ}^H$ and $\widetilde{\bf H}_{AI}$. Then, we have $\widetilde{\bf{h}}_{IJ}^H \boldsymbol{\Phi}\overline{\bf{H}}_{AI} \sim \overline{\bf{h}}_{IJ}^H \boldsymbol{\Phi}\widetilde{\bf{H}}_{AI} \sim \mathcal{CN}\left({\bf 0},\frac{\kappa L_{IJ}L_{AI}M}{\left(\kappa+1\right)^2}{\bf I}_{N_t}\right)$. For brief, we denote ${\bf x}=\widetilde{\bf{h}}_{IJ}^H \boldsymbol{\Phi}\overline{\bf{H}}_{AI}$. For any ${\bf f}$,
$\widetilde{\bf{h}}_{IJ}^H \boldsymbol{\Phi}\overline{\bf{H}}_{AI}{\bf{f}}=\sum \limits_{i=1}^{N_t} x_if_i$, where $x_i\sim \mathcal{CN}\left({ 0},\frac{\kappa L_{IJ}L_{AI}M}{\left(\kappa+1\right)^2}\right)$, so $\sum \limits_{i=1}^{N_t} x_if_i \sim \mathcal{CN}\left({0},\sum\limits_{i=1}^{N_t}\frac{\kappa L_{IJ}L_{AI}M|f_i|^2}{\left(\kappa+1\right)^2}\right)$. Since $\| {\bf f}\|^2=P$, we have $\widetilde{\bf{h}}_{IJ}^H \boldsymbol{\Phi}\overline{\bf{H}}_{AI}{\bf{f}}\sim \mathcal{CN}\left({ 0},\frac{\kappa L_{IJ}L_{AI}MP}{\left(\kappa+1\right)^2}\right)$. Similarly, $\overline{\bf{h}}_{IJ}^H \boldsymbol{\Phi}\widetilde{\bf{H}}_{AI}{\bf{f}}$ follows $\mathcal{CN}\left({ 0},\frac{\kappa L_{IJ}L_{AI}MP}{\left(\kappa+1\right)^2}\right)$. Hence, for any $\boldsymbol{\phi}$ and ${\bf f}$, $\boldsymbol{\phi}^H \widetilde{\bf{G}}_{AJ}{\bf{f}}\sim \mathcal{CN}\left({ 0},\frac{2\kappa L_{IJ}L_{AI}MP}{\left(\kappa+1\right)^2}\right)$.
\subsection{ Proof of Proposition 3}
For the left constraint of (\ref{out}), according to the triangle inequality, we find
\begin{align}
\notag
&\big |\boldsymbol{\phi}^H \left(\overline{\bf{G}}_{AE}+\widetilde{\bf{G}}_{AE} \right){\bf{f}}\big |^2 \\
\notag
&\leq |\boldsymbol{\phi}^H\overline{\bf{G}}_{AE} {\bf{f}}|^2+|\boldsymbol{\phi}^H\widetilde{\bf{G}}_{AE}{\bf{f}}|^2+2|\boldsymbol{\phi}^H\overline{\bf{G}}_{AE}  {\bf{f}}||\boldsymbol{\phi}^H\widetilde{\bf{G}}_{AE}{\bf{f}}|\\
\notag
&\overset{(d)}\leq  L_{AI}L_{IE}\left(\frac{\kappa M\sqrt{N_tP}}{\kappa+1}+|\boldsymbol{\phi}^H\widetilde{\bf{G}}_{AE}^{small}{\bf{f}}|\right)^2,
\end{align}
where $\widetilde{\bf{G}}_{AE}^{small}\triangleq\frac{\sqrt{\kappa}}{\kappa+1}\left({\rm diag}\left({\bf{h}}_{IE}^{NLoS}\right){\bf{H}}_{AI}^{LoS} + {\rm diag}\left({\bf{h}}_{IE}^{LoS}\right){\bf{H}}_{AI}^{NLoS}\right)\sim \mathcal{CN}\left({ 0},\frac{2\kappa M}{\left(\kappa+1\right)^2}\right)$ based on proposition 1 and any choice of $\boldsymbol{\phi}$ and ${\bf{f}}$ does not impact the distribution of $\boldsymbol{\phi}^H\widetilde{\bf{G}}_{AE}^{small}{\bf{f}}\sim \mathcal{CN}\left({ 0},\frac{2\kappa MP}{\left(\kappa+1\right)^2}\right)$ based on proposition 2. $(d)$ is due to $|\boldsymbol{\phi}^H\overline{\bf{G}}_{AE} {\bf{f}}|^2\leq \|\boldsymbol{\phi}^H\|^2\|\overline{\bf{G}}_{AE}\|^2_F \|{\bf{f}}\|^2= \frac{\kappa^2 L_{AI}L_{IE} M^2 N_tP}{\left(\kappa+1\right)^2}$, where $\|\boldsymbol{\phi}^H\|^2=M$, $\|{\bf{f}}\|^2=P$ and each element of $\overline{ \bf{G}}_{AE}$ has constant modulus $\frac{\kappa \sqrt{L_{AI}L_{IE}}}{\kappa+1}$ analyzed in (\ref{AllError}) so $\|\overline{\bf{G}}_{AE}\|^2_F=\frac{\kappa^2 L_{AI}L_{IE} M N_t}{\left(\kappa+1\right)^2}$. When ${\bf{f}}$ and $\boldsymbol{\phi}$ are adopted to the LoS components in Alice-IRS-Eve channel link $\overline{\bf{G}}_{AE}$, the equation $|\boldsymbol{\phi}^H\overline{\bf{G}}_{AE} {\bf{f}}|^2= \frac{\kappa^2 L_{AI}L_{IE} M^2 N_tP}{\left(\kappa+1\right)^2}$ in $(b)$ holds. It implies that the information leakage for Eve is maximum.

Then, we have the upper bound  as
\begin{align}
\notag
&{\rm{Pr}}\left\{C_E(\boldsymbol{\omega}_I,{\bf{f}},\boldsymbol{\phi})\geq R_E \right\}\\
\notag
 &\leq {\rm{Pr}}\left\{ \left(\frac{\kappa M \sqrt{N_tP}}{\kappa+1}+|\boldsymbol{\phi}^H\widetilde{\bf{G}}_{AE}^{small}{\bf{f}}|\right)^2\geq \frac{\left(2^{R_E}-1\right)\sigma^2}{L_{AI}L_{IE}} \right\}.
 \end{align}
This constraint is irrelevant to $\boldsymbol{\phi}$ and ${\bf{f}}$.
\subsection{ Proof of Proposition 4}
For the right constraint of (\ref{out}), with the triangle inequality, we obtain
\begin{align}
\notag
&\big |{\boldsymbol{\phi}}^H \left(\overline{\bf{G}}_{AB}+\widetilde{\bf{G}}_{AB} \right){\bf{f}}\big|^2 \\
\notag
&\geq |{\boldsymbol{\phi}}^H\overline{\bf{G}}_{AB}
{\bf{f}}|^2+|{\boldsymbol{\phi}}^H\widetilde{\bf{G}}_{AB}{\bf{f}}|^2-2|{\boldsymbol{\phi}}^H\overline{\bf{G}}_{AB}  {\bf{f}}||{\boldsymbol{\phi}}^H\widetilde{\bf{G}}_{AB}{\bf{f}}|,
\end{align}
 Then, we transform the constraint ${\rm{Pr}}\left\{C_B(\boldsymbol{\omega}_I,{\bf{f}},\boldsymbol{\phi}) \leq R_B\right\}$ with its upper bound as
  \begin{align}
\notag
 &{\rm{Pr}}\left\{C_B(\boldsymbol{\omega}_I,{\bf{f}},\boldsymbol{\phi})\leq R_B \right\}\\
 \notag
&\leq {\rm{Pr}}\left\{\log\left(1+\frac{\left(|{\boldsymbol{\phi}}^H\overline{\bf{G}}_{AB} {\bf{f}}|-|{\boldsymbol{\phi}}^H\widetilde{\bf{G}}_{AB}{\bf{f}}|\right)^2 }{{\sigma}^2}\right)
 \leq R_B \right\}.
 \end{align}
  Although the choice of ${\bf{f}}$ and $\boldsymbol{\phi}$ will not change the distribution of $\boldsymbol{\phi}^H \widetilde{\bf{G}}_{AB}{\bf{f}}$ based on proposition 2, but ${\bf{f}}$ and $\boldsymbol{\phi}$ impact the value of $\boldsymbol{\phi}^H \overline{\bf{G}}_{AB}{\bf{f}}$, so ${\bf{f}}$ and $\boldsymbol{\phi}$ impact this constraint together.

  The main challenge is still how to adjust ${\bf{f}}$ and $\boldsymbol{\phi}$. Fortunately, the existing work \cite{Yan-16} implies that for a Rician fading, if ${\bf{f}}$ and $\boldsymbol{\phi}$ will not impact the communication quality of Eve, they should be adopted to the LoS component in Alice-IRS-Bob link for achieving the lowest outage probability. This is because that the outage probability is decreasing with Rician factor. When ${\bf{f}}$ and $\boldsymbol{\phi}$ are adopted to the LoS component in Alice-IRS-Bob link, the power of the deterministic component is maximum, so the Rician factor is maximum.

  Therefore, if we can prove that ${\boldsymbol{\phi}}^H \left(\overline{\bf{G}}_{AB}+\widetilde{\bf{G}}_{AB} \right){\bf{f}}$ follows Rician fading and any ${\bf{f}}$ and $\boldsymbol{\phi}$ will not impact SNR of Eve, ${\bf{f}}$ and $\boldsymbol{\phi}$ should adopt to the LoS component $\overline{\bf{G}}_{AB}$ to obtain the lowest outage probability.
  In fact, ${\boldsymbol{\phi}}^H \left(\overline{\bf{G}}_{AB}+\widetilde{\bf{G}}_{AB} \right){\bf{f}}$ is a Rician channel based on the analysis of (\ref{rate}) with determinate component $\boldsymbol{\phi}^H\overline{\bf{G}}_{AB}{\bf{f}}$ and random component $\boldsymbol{\phi}^H\widetilde{\bf{G}}_{AB}{\bf{f}}$. And based on proposition 4, the upper bound of ${\rm{Pr}}\left\{C_E(\boldsymbol{\omega}_I,{\bf{f}},\boldsymbol{\phi}) \geq R_E\right\}$ is irrelevant to $\boldsymbol{\phi}$ and ${\bf{f}}$.
  Hence, by adjusting ${\bf{f}}^*= \sqrt{\frac{P}{N_t}}{\boldsymbol{\alpha}}_A(\varphi_{AI})$ and $\boldsymbol{\phi}^*={\rm diag}\left(
  \boldsymbol{\alpha}_I(\varphi_{IB})\right)\boldsymbol{\alpha}_{I}\left(\theta_{AI}\right)$, the lowest outage probability is achieved. By setting the optimal ${\bf{f}}^*$ and ${\boldsymbol{\phi}}^*$, we obtain
  \begin{align}
\notag
 &{\rm{Pr}}\left\{\log\left(1+\frac{\left(|{\boldsymbol{\phi}}^H\overline{\bf{G}}_{AB} {\bf{f}}|-|{\boldsymbol{\phi}}^H\widetilde{\bf{G}}_{AB}{\bf{f}}|\right)^2 }{{\sigma}^2}\right)
 \leq R_B \right\}\\
\label{PB}
&={\rm{Pr}}\left\{ \left( \frac{\kappa M\sqrt{N_tP}}{\kappa+1}-|\boldsymbol{\phi}^{H*}\widetilde{\bf{G}}_{AB}^{small}{\bf{f}^*}| \right)^2\leq \frac{\left(2^{R_B}-1\right)\sigma^2}{L_{AI}L_{IB}} \right\}.
 \end{align}
 where $\widetilde{\bf{G}}_{AB}^{small}\triangleq\frac{\sqrt{\kappa}}{\kappa+1}\left({\rm diag}\left({\bf{h}}_{IB}^{NLoS}\right){\bf{H}}_{AI}^{LoS} + {\rm diag}\left({\bf{h}}_{IB}^{LoS}\right){\bf{H}}_{AI}^{NLoS}\right) \sim \mathcal{CN}\left({ 0},\frac{2\kappa M}{\left(\kappa+1\right)^2}\right)$ based on proposition 1. Since the choice of ${\bf{f}}$ and $\boldsymbol{\phi}$ will not change distribution of $\boldsymbol{\phi}^H \widetilde{\bf{G}}_{AB}^{small}{\bf{f}}$, so ${\boldsymbol{\phi}^H}^* \widetilde{\bf{G}}_{AB}^{small}{\bf{f}}^*$ still follows $\mathcal{CN}\left({ 0},\frac{2\kappa MP}{\left(\kappa+1\right)^2}\right)$ based on proposition 2.
 After transforming ${\rm{Pr}}\left\{C_B(\boldsymbol{\omega}_I,{\bf{f}},\boldsymbol{\phi}) \leq R_B\right\}$ with (\ref{PB}), this constraint is irrelevant to ${\bf{f}}$ and ${\boldsymbol{\phi}}$.

\subsection{Proof of Proposition 5}
 First, we transform the two constraints $\alpha_E\leq \frac{\left(2^{R_E}-1\right)\sigma^2}{L_{AI}L_{IE}}$ and $\alpha_B\geq \frac{\left(2^{R_B}-1\right)\sigma^2}{L_{AI}L_{IB}}$ as $R_E\geq \log\left( 1+\frac{\alpha_EL_{AI}L_{IE}}{\sigma^2}\right)$
and $R_B\leq \log\left( 1+\frac{\alpha_BL_{AI}L_{IB}}{\sigma^2}\right)$. Then, (\ref{AO_method1}) is transformed as
\begin{align}
\notag
&\max \limits_{\boldsymbol{\omega}_I \in \Omega_{I},R_B,R_E} \ R_B-R_E, \\
\notag
&\ s.t. \ R_E\geq \log\left( 1+\frac{\alpha_EL_{AI}L_{IE}}{\sigma^2}\right),\ R_B\leq \log\left( 1+\frac{\alpha_BL_{AI}L_{IB}}{\sigma^2}\right).
 \end{align}
 To maximize $R_B-R_E$ is equivalent to maximize $\log\left( 1+\frac{\alpha_BL_{AI}L_{IB}}{\sigma^2}\right)-\log\left( 1+\frac{\alpha_EL_{AI}L_{IE}}{\sigma^2}\right)$. Hence, the above problem is equivalently transformed as
\begin{align}
&\max \limits_{\boldsymbol{\omega}_I\in {\Omega}_I} \ \frac{\sigma^2+\alpha_BL_{AI}L_{IB}}{\sigma^2+\alpha_EL_{AI}L_{IE}}.
\end{align}
\subsection{ Proof of Proposition 6}
For briefly, we denote ${\bf{h}}_b\triangleq\left({\bf{h}}_{IB}^H\boldsymbol{\Phi}{{\bf{H}}}_{AI}\right)^H\in \mathbb{C}^{M\times 1}$ and ${\bf{h}}_e \triangleq \left({\bf{h}}_{IE}^H\boldsymbol{\Phi}
{{\bf{H}}}_{AI}\right)^H\in \mathbb{C}^{M\times 1}$. Then, we assume that ${\bf{F}}^*$ is the optimal solution of $(\ref{beam2})$ and construct a new solution $\widetilde{\bf{F}}^*\triangleq{\bf{F}}^{*\frac{1}{2}}{\bf{P}}{\bf{F}}^{*\frac{1}{2}}$, where ${\bf{P}}\triangleq\frac{{\bf{F}}^{*\frac{1}{2}}
{\bf{h}}_b{\bf{h}}_b^H{\bf{F}}^{*\frac{1}{2}}}{\left\|{\bf{F}}^{*\frac{1}{2}} {\bf{h}}_b\right\|^2}$ is the projection matrix.
Obviously, $\widetilde{\bf{F}}^*$ is a rank-one matrix. From the value of function ${\bf{F}}^*-\widetilde{\bf{F}}^*={\bf{F}}^{*\frac{1}{2}}\left(
{\bf{I}}-{\bf{P}}\right){\bf{F}}^{*\frac{1}{2}}\succeq {\bf{0}}$, we find ${\rm Tr}\left( \widetilde{\bf{F}}^* \right)\leqslant {\rm Tr}\left({\bf{F}}^* \right)$,
which means that the objective value of $(\ref{beam2})$ obtained by $\widetilde{\bf{F}}^*$ is not worse than that obtained by ${\bf{F}}^*$. Finally, we check whether $\widetilde{\bf{F}}^*$ is satisfying constraint (\ref{BTI}). Since it is computationally
intractable to check whether $\widetilde{\bf{F}}^*$ satisfies the constraint (\ref{BTI}) directly, we instead consider the equivalent constraint (\ref{AOb}), which can be equivalently reformulated as
\begin{align}
{\rm{Pr}}\left\{ \underbrace{\log\left(1+\frac{{\bf{h}}_b^H{\bf{F}}{\bf{h}}_b}{{\sigma}^2}\right)}_{t_1}- \underbrace{\log\left(1+\frac{{\bf{h}}_e^H{\bf{F}}{\bf{h}}_e}{{\sigma}^2}\right)}_{t_2} \geq R \right\} \geq 1- p_{out}.
\end{align}
By substituting $\widetilde{\bf{F}}^*$ into $t_1$, we have
${\bf{h}}_b^H\widetilde{\bf{F}}^*{\bf{h}}_b={\bf{h}}_b^H {\bf{F}}^{*\frac{1}{2}}{\bf{P}}{\bf{F}}^{*\frac{1}{2}} {\bf{h}}_b=\frac{{\bf{h}}_b^H {\bf{F}}^{*\frac{1}{2}}{\bf{F}}^{*\frac{1}{2}}{\bf{h}}_b{\bf{h}}_b^H{\bf{F}}^{*\frac{1}{2}}{\bf{F}}^{*\frac{1}{2}} {\bf{h}}_b}{ {\bf{h}}_b^H {\bf{F}}^{*\frac{1}{2}} {\bf{F}}^{*\frac{1}{2}} {\bf{h}}_b}={\bf{h}}_b^H {\bf{F}}^* {\bf{h}}_b$. Hence, the value of $t_1$ is the same by replacing ${\bf{F}}^*$ with $\widetilde{\bf{F}}^*$. Moreover, we have
${\bf{h}}_e^H{\bf{F}}^*{\bf{h}}_e-{\bf{h}}_e^H\widetilde{\bf{F}}^*{\bf{h}}_e= {\bf{h}}_e^H\left({\bf{F}}^*-{\bf{F}}^{*\frac{1}{2}}{\bf{P}}{\bf{F}}^{*\frac{1}{2}}\right){\bf{h}}_e = {\bf{h}}_e^H\left( {\bf{F}}^{*\frac{1}{2}}\left({\bf{I}}-{\bf{P}}\right){\bf{F}}^{*\frac{1}{2}}\right){\bf{h}}_e\geqslant 0$. Thus, the value of $t_2$ is not increase by substituting $\widetilde{\bf{F}}^*$ into it. Therefore, $\widetilde{\bf{F}}^*$ still satisfies $(\ref{AOb})$ and then satisfies (\ref{BTI}). Hence,
we must obtain a rank-one solution of $(\ref{beam2})$ if the problem is feasible. 

\end{document}